\newcommand{\longversion}[1]{#1}
\newcommand{\shortversion}[1]{}
\DeclareMathOperator*{\argmax}{arg\,max}
\newcommand\blfootnote[1]{%
	\begingroup
	\renewcommand\thefootnote{}\footnote{#1}%
	\addtocounter{footnote}{-1}%
	\endgroup
}
\newcommand{\BigO}{\ensuremath{\mathcal{O}}\xspace}
\newcommand{\el}{\ensuremath{\ell}\xspace}
\newcommand{\suc}{\ensuremath{\succ}\xspace}
\newcommand{\PF}{\ensuremath{\mathfrak P}\xspace}
\renewcommand{\AA}{\ensuremath{\mathcal A}\xspace}
\newcommand{\BB}{\ensuremath{\mathcal B}\xspace}
\newcommand{\CC}{\ensuremath{\mathcal C}\xspace}
\newcommand{\EE}{\ensuremath{\mathcal E}\xspace}
\newcommand{\HH}{\ensuremath{\mathcal H}\xspace}
\newcommand{\II}{\ensuremath{\mathcal I}\xspace}
\newcommand{\OO}{\ensuremath{\mathcal O}\xspace}
\newcommand{\PP}{\ensuremath{\mathcal P}\xspace}
\newcommand{\QQ}{\ensuremath{\mathcal Q}\xspace}
\renewcommand{\SS}{\ensuremath{\mathcal S}\xspace}
\newcommand{\TT}{\ensuremath{\mathcal T}\xspace}
\newcommand{\UU}{\ensuremath{\mathcal U}\xspace}
\newcommand{\VV}{\ensuremath{\mathcal V}\xspace}
\newcommand{\XX}{\ensuremath{\mathcal X}\xspace}
\newcommand{\fff}{\ensuremath{\mathfrak f}\xspace}
\newcommand{\EB}{\ensuremath{\mathbb E}\xspace}
\newcommand{\nfrac}{\nicefrac}
\newcommand{\eps}{\varepsilon}
\renewcommand{\epsilon}{\eps}
\newcommand{\ignore}[1]{}
\newcommand{\pr}{\ensuremath{\prime}}
\newcommand{\prr}{\ensuremath{{\prime\prime}}}
\renewcommand{\leq}{\leqslant}
\renewcommand{\geq}{\geqslant}
\renewcommand{\ge}{\geqslant}
\renewcommand{\le}{\leqslant}
\newtheorem{observation}{\bf Observation}
\newtheorem{theorem}{\bf Theorem}
\newtheorem{lemma}{\bf Lemma}
\crefname{property}{Property}{Properties}
\crefname{example}{Example}{Examples}
\crefname{theorem}{Theorem}{Theorems}
\crefname{observation}{Observation}{Observations}
\crefname{lemma}{Lemma}{Lemmas}
\crefname{corollary}{Corollary}{Corollaries}
\crefname{proposition}{Proposition}{Propositions}
\crefname{definition}{Definition}{Definitions}
\crefname{claim}{Claim}{Claims}
\crefname{reductionrule}{Reduction rule}{Reduction rules}
\crefname{ineq}{inequality}{Inequalities}
\numberwithin{property}{section}
\numberwithin{example}{section}
\numberwithin{proposition}{section}
\numberwithin{claim}{section}
\numberwithin{theorem}{section}
\numberwithin{corollary}{section}
\numberwithin{observation}{section}
\numberwithin{lemma}{section}
\numberwithin{definition}{section}
\numberwithin{reductionrule}{section}
\begin{document}

\title{Query Complexity of Tournament Solutions}

\author{Arnab Maiti and Palash Dey\\Indian Institute of Technology, Kharagpur\\
\texttt{\{arnabmaiti,palash.dey\}@cse.iitkgp.ac.in}}

\maketitle

\begin{abstract}
 A directed graph where there is exactly one edge between every pair of vertices is called a {\em tournament}. Finding the ``best'' set of vertices of a tournament is a well-studied problem in social choice theory. A {\em tournament solution} takes a tournament as input and outputs a subset of vertices of the input tournament. However, in many applications, for example, choosing the  best set of drugs from a given set of drugs, the edges of the tournament are given only implicitly and knowing the orientation of an edge is costly. In such scenarios, we would like to know the best set of vertices (according to some tournament solution) by ``querying'' as few edges as possible. We, in this paper, precisely study this problem for commonly used tournament solutions: given an oracle access to the edges of a tournament \TT, find $\fff(\TT)$ by querying as few edges as possible, for a tournament solution \fff. We first study some popular tournament solutions and show that any deterministic algorithm for finding the Copeland set, the Slater set, the Markov set, the bipartisan set, the uncovered set, the Banks set, and the top cycle must query $\Omega(n^2)$ edges in the worst case. We also show similar lower bounds on the expected query complexity of these tournament solutions by any randomized algorithm. On the positive side, we are able to circumvent our strong query complexity lower bound results by proving that, if the size of the top cycle of the input tournament is at most $k$, then we can find all the tournament solutions mentioned above by querying $\BigO(nk +\nfrac{n\log (\nfrac{n}{k}) }{\log(\nfrac{k}{k-1})})$ edges only.
\end{abstract}

{\it Keywords:} tournament, tournament solution, query complexity, lower bound, social choice, algorithm, voting.

\newpage
\blfootnote{A preliminary version of this work appeared in the proceedings of AAAI Conference on Artificial Intelligence 2017~\cite{DBLP:conf/aaai/Dey17}. Our lower-bound results on the expected query complexity of any randomized algorithm for these tournament solutions in \Cref{thm:random_many_st_lb,thm:random_bps_lb,thm:random_ucs_bs_lb,thm:random_topcycle_lb} are new which were not present in the preliminary version.}

\section{Introduction}

% Tournaments are common in many application of multiagent systems.

Many scenarios in multiagent systems can often be modeled and analyzed using tournaments~\cite{moulin86,brandt2014minimal}. An important example of such scenarios is {\em voting} where we have a set of alternatives and a set of votes which are linear orders over the set of alternatives. An important tournament to consider in this context is defined by the {\em majority relation} induced by the set of votes. In the majority relation, an alternative $x$ is preferred over another alternative $y$ if $x$ is preferred over $y$ in a majority of the votes. Indeed, many important voting rules, like Copeland for example, are defined using the tournament induced by the majority relation of the input set of votes. Other than voting, tournaments have found many applications in multi-criteria decision analysis~\cite{arrow1986social,bouyssou2006evaluation}, zero-sum games~\cite{fisher1995tournament,duggan1996dutta}, coalitional games~\cite{brandt2010characterization}, argumentation theory~\cite{dung1995acceptability,dunne2007computational}, biology~\cite{charon2007survey}, etc.

Formally, a {\em tournament} is defined as a set of alternatives along with an {\em irreflexive, antisymmetric, and total relation}, called {\em dominance relation}, on the set of alternatives. An equivalent and often more convenient view of a tournament is as a {\em directed} graph on the alternatives where, between every pair of vertices (which corresponds to the alternatives), there is exactly one edge. A {\em tournament solution} takes a tournament as input and outputs a subset of the vertices. Tournaments and tournament solutions are important mathematical tools in any general situation where we have to make a choice from a set of alternatives solely based on pairwise comparisons. We refer the reader to \cite{laslier97,brandt2016tournament,hudry2009survey,suksompong2021tournaments} for more details on tournament solution.\\
\longversion{\subsection{Motivation}}
\shortversion{{\bf Motivation.~~}}
We often have situations where the input tournament is given ``implicitly'' -- the vertices of the tournament are given explicitly and we have to query for an edge to know its orientation. Moreover, knowing the orientation of an edge of the tournament can often be costly. For example, we can think of an application where we have a set of drugs for a particular disease and we want to know the ``best'' (according to some tournament solution) set of drugs. A natural dominance relation in this context would be to define a drug $\gamma$ to dominate another drug $\eta$ if the probability that the drug $\gamma$ cures the disease is more than the corresponding probability for the drug $\eta$. Since these probabilities are often not known a priori, estimating them often requires extensive lab experiments as well as clinical trials. Hence, we would like to make as few queries as possible to know the best set of drugs. More generally, we can think of any {\em tournament based voting rules} like Copeland in an election scenario. A tournament based voting rule chooses winners solely based on the tournament induced by the pairwise majority relation between the alternatives. However, in many applications of voting in multiagent systems, recommender systems~\cite{PennockHG00} for example, the number of voters is huge and consequently, learning the majority relation is costly. Hence, we would like to learn the set of most popular items (according to the tournament solution under consideration) with the smallest number of queries possible. Motivated by these applications, we study, for a tournament solution \fff, the problem of finding $\fff(\TT)$ of an input tournament \TT by querying the smallest number of edges possible. A {\em query} for an edge, in our model, reveals the orientation of the edge in the input tournament.

Finding the query complexity of various graph properties has drawn significant attention in the literature. In the most general setting, the input is a directed graph on $n$ vertices and one has to find whether the input graph satisfies some property we are concerned with, by asking a minimum number of queries. A query is a question of the form: ``Is there an edge from a vertex $x$ to another vertex $y$?'' The {\em query complexity} of a property is the worst case number of queries that must be made to know whether the input graph has that property. A graph property, in this context, is called {\em evasive} if its query complexity is $n(n-1)$, that is, one has to query all possible edges of the input digraph to test the property in the worst case. Karp conjectured that every monotone and nontrivial graph property is evasive~\cite{rosenberg1973time}. A graph property is {\em monotone} if it continues to hold even after adding more edges and {\em nontrivial} if it holds for some but not all graphs. A substantial amount of research effort has provided increasingly better query complexity lower bounds for monotone and nontrivial properties, although Karp's conjecture remained open~\cite{rosenberg1973time,rivest1976recognizing,kahn1984topological,king1988lower,chakrabarti2001evasiveness,korneffel2010asymptotic,Kulkarni}. In the case of tournaments (where there is exactly one edge between every pair of vertices), Balasubramanian et al.~\cite{DBLP:journals/jal/BalasubramanianRS97} showed (rediscovered by Procaccia~\cite{procaccia2008note}) that deciding if there exists a Condorcet winner --- a vertex with $n-1$ outgoing edges -- can be computed with $2n-\lfloor \log n \rfloor -2$ queries and this query complexity upper bound is tight in the worst case. This further motivates us to study the query complexity of other commonly used tournament solutions.\\
\longversion{\subsection{Our Contribution}}
\shortversion{{\bf Our Contribution.~~}}
In this paper, we prove tight bounds on the query complexity of commonly used tournament solutions. Our specific contributions in this paper are as follows.

\begin{itemize}
 \item We show that the query complexity of the problem of finding the set of Condorcet non-losers is $2n-\lfloor \log n \rfloor -2$ [\Cref{obs:cond_win_lose}].

 \item We show that any deterministic algorithm for finding the Copeland set, the Slater set, and the Markov set in a tournament has query complexity ${n\choose 2}$ [\Cref{thm:many_st_lb}]. We remark that Goyal et al.~\cite{goyalCaldem17} also discovered this result independently (and in parallel) around the same time. We also show that any randomized algorithm for finding the Copeland set, the Slater set, and the Markov set of tournaments has an expected query complexity of ${n\choose 2}$ [\Cref{thm:random_many_st_lb}].

 \item We prove that any deterministic algorithm for finding the bipartisan set [\Cref{thm:bps_lb}], the uncovered set [\Cref{thm:ucs_lb}], the Banks set [\Cref{thm:banks_lb}], and the top cycle [\Cref{thm:topcycle_lb}] has query complexity $\Omega(n^2).$ We then extend all these lower bounds to randomized algorithms without deteriorating the lower bounds~[\Cref{thm:random_bps_lb,thm:random_ucs_bs_lb,thm:random_topcycle_lb}].

 \item We complement our strong query complexity lower bounds above by showing that, if the tournament \TT has a top cycle of size at most $k$, then the Copeland set, the Slater set, the Markov set, the bipartisan set, the uncovered set, the Banks set, and the top cycle of a tournament \TT can be found using $\BigO(nk + \nfrac{n\log (\nfrac{n}{k}) }{\log(\nfrac{k}{k-1})})$ queries [\Cref{thm:topcycle_ub,thm:others_cycle_ub}].
\end{itemize}
\longversion{\subsection{Related Work}}
\shortversion{{\bf Related Work.~~}}
The work of Balasubramanian et al.~\cite{DBLP:journals/jal/BalasubramanianRS97} (rediscovered by Procaccia~\cite{procaccia2008note}) is the closest predecessor of our work where they show that the query complexity of Condorcet winner in tournaments is $2n-\lfloor \log n \rfloor -2.$ Goyal et al.~\cite{goyalCaldem17} show that all the edges need to be queried to find a vertex with some specific degrees. There have been several other works in the literature which study the query complexity of various problems in tournaments. For example, Noy and Naor use comparison-based sorting algorithms to find feedback sets and Hamiltonian paths in tournaments~\cite{bar1990sorting}. There have been works in computational social choice theory on communication complexity of different voting rules~\cite{conitzer2002vote} and query complexity of preference elicitation in various domains~\cite{Conitzer09,deypeak,deycross,DBLP:journals/corr/DeyM16}. However, the querying model in the above works is completely different from ours and consequently, neither the results nor the techniques involved in these works are directly applicable to our work; a query in the works above asks: ``Does a voter $v$ prefer an alternative $x$ over another alternative $y$?''

Recently, \cite{brill2022margin} worked on Margin of Victory in tournament solutions where the aim is to determine how robust a tournament solution is when the tournament is subject to changes. Kondratev and Mazalov studied manipulating various power indices, for example the Shapley–Shubik index, the Penrose–Banzhaf index, and the nucleolus, via forming coalitions in a tournament setting~\cite{kondratev2020tournament}. Saile and Suksompong studied a number of common tournament solutions, including the top cycle and the uncovered set in a model where the orientation of different edges can be chosen with different probabilities and showed that these tournament solutions are still unlikely to rule out any alternative under this model~\cite{saile2020robust}. Han and Van Deemen proposed a new solution for tournaments called the unsurpassed set which
lies between the uncovered set and the Copeland winner set~\cite{han2019refinement}. Manurangsi and Suksompong showed important characterization results for generalized kings in tournaments~\cite{manurangsi2022generalized}. There exist a large body of research on manipulating tournaments~\cite[and references therein]{manurangsi2023fixing,lisowski2022strategic}. We refer to~\cite{suksompong2021tournaments} for a survey of recent developments in tournament solutions. Last but not the least, there have been quite a number of papers on probabilistic pairwise comparison queries and for this we refer the reader to the survey \cite{busa2014survey}.

% \begin{table}[!htbp]
%  \centering
%  {\renewcommand*{\arraystretch}{1.5}
%  \begin{tabular}{|c|c|c|}\hline
%   \multirow{2}{*}{Tournament solutions} & \multicolumn{2}{|c|}{Query Complexity}\\\cline{2-3}
%   & Upper bounds & Lower bounds\\\hline\hline
%
%   Condorcet non-loser & $2n-\lfloor \log n \rfloor -2$ [\Cref{cor:cnl_ub}] & $2n-\lfloor \log n \rfloor -2$ [\Cref{cor:cnl_lb}] \\\hline
%
%   Copeland set & \multirow{6}{*}{${n \choose 2}$ [\Cref{obs:trivial}]} & \multirow{3}{*}{${n \choose 2}$ [\Cref{thm:many_st_lb}]} \\\cline{1-1}
%
%   Slater set &  &  \\\cline{1-1}
%
%   Markov set & & \\\cline{1-1}\cline{3-3}
%
%   Bipartisan set & & $\Omega(n^2)$ [\Cref{thm:bps_lb}] \\\cline{1-1}\cline{3-3}
%
%   Uncovered set &  & $\Omega(n^2)$ [\Cref{thm:ucs_lb}] \\\cline{1-1}\cline{3-3}
%
%   Banks set & & $\Omega(n^2)$ [\Cref{thm:banks_lb}] \\\hline
%
%   Top cycle & \makecell{${n \choose 2}$ [\Cref{obs:trivial}]\\$\BigO(nk + \nfrac{n\log n}{\log(1-\nfrac{1}{k})})$ [\Cref{thm:topcycle_ub}]} & $\Omega(n^2)$ [\Cref{thm:topcycle_lb}] \\\hline
%  \end{tabular}}
%  \caption{Summary of query complexity results.}\label{tbl:summary}
% \end{table}
%
% We complement the lower bounds of \Cref{tbl:summary} by showing that, if the tournament \TT has a top cycle of size at most $k$, then the Copeland set, Slater set, Markov set, bipartisan set, uncovered set, Banks set, and top cycle of a tournament \TT can be found using $\BigO(nk + \nfrac{n\log n}{\log(1-\nfrac{1}{k})})$ queries.

\subsection*{Organization} In \Cref{sec:prelim}, we formally define tournaments, tournament solutions; we present our query complexity lower bounds in \Cref{sec:qlb}; we show in \Cref{sec:small_cycle} how the presence of a small top cycle enables us to reduce the query complexity of all the tournament solutions studied in this paper; we finally conclude with future directions in \Cref{sec:con}. A preliminary version of this work appeared in the proceedings of AAAI Conference on Artificial Intelligence 2017~\cite{DBLP:conf/aaai/Dey17}. Our lower-bound results on the expected query complexity of any randomized algorithm for these tournament solutions in \Cref{thm:random_many_st_lb,thm:random_bps_lb,thm:random_ucs_bs_lb,thm:random_topcycle_lb} are new which were not present in the preliminary version.

\section{Preliminaries}\label{sec:prelim}

For a positive integer \el, we denote the set $\{1, 2, \ldots, \el\}$ by $[\el].$ For a finite set \XX and a positive integer $\el$, we denote the set of all subsets of \XX of size \el by $\PF_\el(\XX)$ and the set of all probability distributions on \XX by $\Delta(\XX).$

\subsection{Tournaments and Tournament Solutions} A tournament $\TT = (\VV, \EE)$ is a directed graph on a set \VV of $n$ vertices such that, for any two vertices $u, v\in\VV$, either $(u,v)\in\EE$ or $(v,u)\in\EE$ but not both. We call any subgraph of a tournament a {\em partial tournament}. We call a tournament {\em regular} if the in-degree and out-degree of every vertex are the same. A tournament \TT defines a relation $\suc_\TT$ on the set of vertices \VV: $u\suc_\TT v$ if $(u,v)\in\TT$. Alternatively, any irreflexive, antisymmetric, and total relation \suc on a set \VV defines a tournament $\TT = (\VV, \EE)$ where $(u,v)\in\EE$ if $u\suc v$. When there is no possibility of confusion, we drop \TT from the subscript of \suc. We call the relation $\suc_\TT$ associated with a tournament \TT the {\em dominance relation} of \TT. We say a vertex $u$ {\em dominates or defeats} another vertex $v$ if $u\suc_\TT v.$ Let us define the {\em dominion} $D(v)$ of a vertex $v$ as $D(v)=\{u\in\VV: v\suc u\}$ and $\overline{D(v)}=\{u\in\VV: u\suc v\}$ is called the {\em dominators} of $v$. $|D(v)|$ is also known as the Copeland score of $v$. Given a tournament \TT, let $\UU(\TT)$ be its adjacency matrix. Recall that $\UU(\TT)_{ij}=1$ if there is an edge from $i$ to $j$; otherwise $\UU(\TT)_{ij}=0$. We call the matrix $\PP(\TT) = \UU(\TT) - \UU(\TT)^t$ the {\em skew-adjacency matrix} of \TT, where $\UU(\TT)^t$ is the transpose of $\UU(\TT)$. A vertex $v$ is called the {\em Condorcet winner} of a tournament if the out-degree of $v$ is $n-1$; alternatively, a vertex $v$ is the Condorcet winner if $v\suc u$ for every $u\in\VV\setminus\{v\}$. Given a set \VV, we denote the set of all tournaments over \VV by ${\frak T}(\VV)$. A {\em tournament solution} $\fff:\bigcup_{|\VV|>0} {\frak T}(\VV) \rightarrow 2^\VV\setminus\{\emptyset\}$ is a function that takes a tournament as input and selects a nonempty set of vertices as output. Examples of commonly used tournament solutions are as follows~\cite{brandt2016tournament}.

\begin{itemize}
 \item {\bf Condorcet non-loser:} The Condorcet non-loser set of a tournament is the set of all vertices which has at least one outgoing edge.

 \item {\bf Copeland set:} The Copeland set of a tournament is the set of vertices with maximum out-degree.

 \item {\bf Slater set:} Given a tournament $\TT=(\VV,\suc),$ let us denote the maximal element of \VV according to a strict linear order $\suc^\pr$ on \VV by max($\suc^\pr$). The Slater score of a strict linear order $\suc^\pr$ over \VV with respect to tournament \TT = (\VV, \suc) is $|\suc^\pr \cap \suc|$ where $\suc^\pr \cap \suc = \{(x,y)\in\VV\times\VV: x\suc^\pr y, x\suc y\}$. A strict linear order is a Slater order if it has maximum Slater score. Then the Slater set is defined as SL(\TT) = \{max($\suc^\pr$) : $\suc^\pr$ is a Slater order for \TT\}.

 \item {\bf Markov set:} Given a tournament $\TT=(\VV,\EE),$ we define a Markov chain ${\frak M(\TT)}$ as follows. The states of the Markov chain ${\frak M(\TT)}$ are the vertices of \TT and the transition probabilities are determined by the dominance relation: in every step, stay in the current state $v$ with probability $\nfrac{|D(v)|}{|\VV|-1}$, and move to state $u$ with probability $\nfrac{1}{|\VV|-1}$ for all $u\in\overline{D(v)}$. The Markov set is the set of vertices that have maximum probability in the unique stationary distribution of ${\frak M(\TT)}$. Formally, the transition matrix of the Markov chain is defined as.
 \[\QQ = \frac{1}{|\VV|-1}\left(\UU(\TT)^t+diag(CO)\right)\]
 where $diag(CO)$ is the diagonal matrix of the Copeland scores. The Markov set MA(\TT) of a tournament \TT is then given by $MA(\TT) = \argmax_{a\in\VV}\{p(a) : p\in\Delta(\VV), \QQ p = p\}.$
%  \[MA(\TT) = \argmax_{a\in\VV}\{p(a) : p\in\Delta(\VV), \QQ p = p\}\]

 \item {\bf Bipartisan set:} The Bipartisan set generalizes the notion of Condorcet winner to lotteries over the vertices of the tournament. Interestingly, for every tournament $\TT=(\VV,\EE),$ there exists a unique {\em maximal lottery}~\cite{brandt2016tournament}. That is, there exists a probability distribution $p\in\Delta(\VV)$ such that, for the skew-adjacency matrix $\PP(\TT) = (g_{ab})_{a,b\in\VV}$ of \TT, $\sum_{a,b\in\VV} p(a)q(b)g_{ab} \ge 0 \text{ for all } q\in \Delta(\VV)$ which can be shown to be equivalent to the following condition.

 \begin{equation}
\sum_{a\in\VV} p(a)g_{ab} \ge 0 \text{ for all } b\in\VV\label{eqn:bps_cond}
 \end{equation}

 Let $p_\TT$ denote the unique maximal lottery of \TT. Then the bipartisan set BP(\TT) of \TT is defined as the support of $p_\TT$. That is,
 \[ BP(\TT) = \{a\in\VV : p_\TT(a)>0 \} \]

 \item {\bf Uncovered set:} Given a tournament \TT=(\VV,\suc), we say a vertex $v\in\VV$ covers another vertex $u\in\VV$ if $D(u) \subseteq D(v)$; this is denoted by $vCu$. We observe that $vCu$ implies $v\suc u$ and is equivalent to $\overline{D(v)} \subseteq \overline{D(u)}$. The uncovered set UC(\TT) of a tournament \TT is given by the set of maximal elements of the covering relation $C$. That is,
 \[UC(\TT) = \{v\in\VV : uCv \text{ for no } u\in\VV\}\]

 \item {\bf Banks set:} A sub-tournament of a tournament $\TT = (\VV, \EE)$ is an induced subgraph of \TT. The Banks set BA(\TT) is the set of maximal elements of all the maximal transitive sub-tournaments of \TT.

 \item {\bf Top cycle:} A non-empty subset of vertices $\BB\subseteq \VV$ is called dominant in a tournament $\TT=(\VV,\suc)$ if $ x\suc y$ for every $x\in\BB$ and $y\in\VV\setminus\BB$. Dominant sets are linearly ordered via set inclusion and the top cycle returns the unique smallest dominant set.
\end{itemize}

A tournament solution is called {\em neutral} if the output does not depend on the names of the input set of vertices. All the tournament solutions discussed above are neutral.

\subsection{Essential States in a Markov Chain} A state $i$ in a finite Markov chain is called {\it essential} if for every state $j$ that is reachable from $i$, the state $i$ is also reachable from $j$. A state is called inessential if it is not essential. A well known fact from probability theory is that, $\pi(i)=0$ for every inessential state $i$, where $\pi$ is a stationary distribution of the Markov chain. Hence, every vertex whose corresponding state in the Markov chain is inessential, does not belong to the Markov set of that tournament. We refer the reader to \cite{bremaud2013markov} for a more detailed discussion on Markov chains.\\

\subsection{Query Model} Given a tournament $\TT = (\VV, \EE)$ on $n$ vertices, a query for a pair of vertices $\{x,y\}\in\PF_2\{\VV\}$ reveals whether $(x,y)\in\EE$ or $(y,x)\in\EE$. The {\em query complexity of an algorithm} is the maximum number of queries the algorithm makes in the worst case. The {\em query complexity of a tournament solution} \fff is the minimum query complexity of any algorithm for computing \fff.

\subsection{Yao's Lemma}

We use Yao's lemma to prove our expected query complexity lower bounds. Intuitively speaking, Yao's lemma states that, to prove a lower bound on the expected cost of any randomized algorithm for some problem, it is enough to prove a lower bound on the average cost of any deterministic algorithm for that problem according to some distribution over inputs.

\begin{lemma}[Yao's Lemma~\cite{DBLP:conf/focs/Yao77}]
	Let $A$ be a randomized algorithm for some problem $\Pi$. For an input $x$, let $T(A,x)$ be the random variable denoting the cost (e.g. running time, number of comparisons made) of $A$ on $x$. Let \XX be the set of all inputs (of length $n$) to $\Pi$, $X$ be a random variable denoting the input chosen from \XX according to some distribution $p$ on \XX, and $\AA_{\Pi}$ be the set of all deterministic algorithms for the problem $\Pi$. Then we have the following.
	\[ \max_{x\in\XX} \EB[T(A,x)] \ge \min_{a\in\AA_{\Pi}} \EB[T(a,X)] \]\label{yao}
\end{lemma}

\section{Query Complexity Lower Bounds of Tournament Solutions}\label{sec:qlb}

% \begin{observation}\label{obs:trivial}
%  Any computable tournament solution can be found (may not be in a polynomial amount of time) using ${m\choose 2}$ queries.
% \end{observation}
We now present our lower bounds. If not mentioned otherwise, by algorithms, we mean deterministic algorithms. We begin with the following observation which gives us the query complexity of the Condorcet non-loser set of tournaments.

\begin{observation}\label{obs:cond_win_lose}
 The query complexity of the Condorcet non-loser set in tournaments is equal to the query complexity of the Condorcet winner in tournaments. Hence, the query complexity of the Condorcet non-loser set in tournaments is $2n-\lfloor \log n \rfloor -2$.
\end{observation}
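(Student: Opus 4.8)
The plan is to show that the query complexity of the Condorcet non-loser set equals that of the Condorcet winner problem, and then invoke the known bound of $2n-\lfloor \log n \rfloor - 2$ from Balasubramanian et al. The key observation is that a vertex is a Condorcet loser (i.e., has out-degree $0$) if and only if it is dominated by every other vertex, which is precisely the ``mirror image'' of being a Condorcet winner. So the two problems should be interreducible by reversing all edges.

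First I would argue the direction ``Condorcet winner is at least as hard as Condorcet non-loser,'' or rather show that an algorithm for one yields an algorithm for the other with the same query count. Given an algorithm $\AA$ that finds the Condorcet winner of any tournament (returning it if it exists, and otherwise reporting its absence) using at most $2n - \lfloor \log n\rfloor - 2$ queries, I would build an algorithm for the Condorcet non-loser set as follows: run $\AA$ on the \emph{reversed} tournament $\TT^{-1}$, simulating each query $\{x,y\}$ of $\AA$ by querying $\{x,y\}$ in $\TT$ and flipping the answer. The Condorcet winner of $\TT^{-1}$ is exactly the Condorcet loser of $\TT$. If $\AA$ reports that $\TT^{-1}$ has a Condorcet winner, say $v$, then the Condorcet non-loser set of $\TT$ is $\VV\setminus\{v\}$; if $\AA$ reports no Condorcet winner in $\TT^{-1}$, then $\TT$ has no Condorcet loser and the Condorcet non-loser set is all of $\VV$. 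Either way the number of queries is unchanged, and symmetrically any algorithm for the Condorcet non-loser set gives one for the Condorcet winner (recover the winner/loser of $\TT^{-1}$ from the complement of the returned set). Hence the two query complexities are equal.

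The one subtlety I need to be careful about — and the only place the argument is not completely mechanical — is whether the ``Condorcet winner'' problem as stated by Balasubramanian et al.\ is the \emph{decision/search} version (output the winner or declare none exists) rather than merely ``output a vertex assuming a winner exists.'' Since a Condorcet non-loser set query algorithm must implicitly detect whether the whole vertex set is the answer (equivalently, whether a Condorcet loser exists), the reduction needs the winner-finding algorithm to also handle the no-winner case; fortunately the $2n-\lfloor\log n\rfloor-2$ algorithm of Balasubramanian et al.\ does exactly this, and its matching lower bound is also proved for this version. Once that is pinned down, combining the two-way reduction with their result immediately yields that the query complexity of the Condorcet non-loser set is $2n - \lfloor \log n \rfloor - 2$.
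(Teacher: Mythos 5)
Your proof is correct and takes essentially the same route as the paper: reverse all edges and observe that a vertex is a Condorcet non-loser in $\TT$ exactly when it is not the Condorcet winner of the reversed tournament, then invoke the $2n-\lfloor \log n \rfloor - 2$ bound of Balasubramanian et al. You spell out the two-way reduction and the ``no winner exists'' case more explicitly than the paper does, but the underlying idea is identical.
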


\begin{proof}
 Given a tournament $\TT=(\VV,\EE)$, let us define another tournament $\overline{\TT} = \{\VV, \overline{\EE}\}$, where $\overline{\EE} = \{(x, y): (y,x)\in\EE\}.$ Now the result follows from the observation that a vertex $v$ is a Condorcet non-loser in \TT if and only if $v$ is not the Condorcet winner in $\overline{\TT}.$ Now the $2n-\lfloor \log n \rfloor -2$ query complexity of the Condorcet non-loser set follows from the $2n-\lfloor \log n \rfloor -2$ query complexity of the Condorcet winner in tournaments~\cite{DBLP:journals/jal/BalasubramanianRS97}.
%  Let $\AA^{cw}$ and $\AA^{cnl}$ be algorithms for finding respectively the Condorcet winner (if it exists) and the set of Condorcet non-losers of a tournament. We consider the following algorithm $\AA^{cnl}$. The algorithm $\AA^{cnl}$ simply outputs $\VV\setminus\{\AA^{cw}(\overline{\TT})\}$, where $\overline{\TT} = \{\VV, \overline{\EE}\}$ and $\overline{\EE} = \{(x, y): (y,x)\in\EE\}.$ The correctness of $\AA^{cnl}$ follows from the correctness of $\AA^{cw}$ and the fact that a vertex $v$ is Condorcet non-loser in \TT if and only if $v$ is not the Condorcet winner in $\overline{\TT}.$ Hence, the query complexity of finding the set of Condorcet non-losers is at most the query complexity of finding the Condorcet winner.
%
%  Now let us suppose $\AA^{cnl}$ is an algorithm for finding the set of Condorcet non-losers of a tournament.
\end{proof}

% \Cref{prop:cond_win_lose} immediately gives us the following results.
%
% \begin{corollary}\label{cor:cnl_ub}
%  The Condorcet non-loser can be found with $2n-\lfloor \log n \rfloor -2$ queries.
% \end{corollary}
%
% \begin{proof}
%  Running the algorithm of \cite{procaccia2008note} on $\overline{\GG} = (\VV, \overline{E})$, where $\overline{\EE}=((x, y): (y,x)\in\EE)$, proves the result.
% \end{proof}
%
% \begin{corollary}\label{cor:cnl_lb}
%  Any algorithm for finding the Condorcet non-loser set has query complexity $2n-\lfloor \log n \rfloor -2$.
% \end{corollary}
%
% \begin{proof}
%  Mimicking the lower bound proof of \cite{procaccia2008note}.
% \end{proof}

We next consider the query complexity of the Slater set of tournaments. The following result provides a necessary condition for a vertex to belong to the Slater set of a tournament~\cite{laslier97}. We provide a short proof here for easy accessibility. We will use it to prove a query complexity lower bound of the Slater set.\shortversion{ In the interest of space, we skip the proof of \Cref{lem:Slater}.}

\begin{lemma}\label{lem:Slater}
 Suppose the out-degree of a vertex $v\in\VV$ in a tournament $\TT=(\VV,\EE)$ on $n$ vertices is strictly less than $\nfrac{(n-1)}{2}$. Then $v$ does not belong to the Slater set of \TT.
\end{lemma}
\longversion{
\begin{proof}
 Let $\overrightarrow{\VV\setminus\{v\}}$ be any arbitrary strict linear order on the vertices in $\VV\setminus\{v\}$. Let $(v_1\suc v_2 \suc\cdots \suc v_k)$ denote the set of edges $\{(v_i,v_j):1\leq i<j\leq k\}$. Now it is easy to observe that $|(v\suc \overrightarrow{\VV\setminus\{v\}}) \cap \EE| < |(\overrightarrow{\VV\setminus\{v\}}\suc v) \cap \EE|$ as the out-degree of vertex $v$ is strictly less than $\nfrac{(n-1)}{2}$. This implies that there is a strict linear order whose maximal element is not $v$ whose Slater score is greater then any strict linear order with $v$ as the maximal element.
\end{proof}
}
Let us consider the tournament $\TT_{reg} = (\VV, \EE)$ on $n$ vertices $\VV = \{a_i: i\in[n]\}$. We assume $n$ to be an odd integer. In $\TT_{reg},$ vertex $a_i$ defeats $a_{i+j\pmod n}$ for every $i\in[n]$ and $j\in[\nfrac{(n-1)}{2}].$ Hence, every vertex has the same in and out degree in $\TT_{reg}$. We will use the tournament $\TT_{reg}$ crucially in our proofs of the query complexity lower bounds of the Copeland set, the Slater set, and the Markov set. The following result is immediate from the definition of neutral tournament solutions.

\begin{lemma}\label{lem:neutral}
 Given the tournament $\TT_{reg}$ as input, the Copeland set, the Slater set, and the Markov set contain the set of all vertices in $\TT_{reg}$.
\end{lemma}

% \begin{proof}
%  Follows from symmetry.
% \end{proof}

Suppose there exists an edge from a vertex $u$ to another vertex $v$ in $\TT_{reg}$. Let $\TT_{reg}^{u,v}$ be the tournament which is the same as $\TT_{reg}$ except the edge from $u$ to $v$ is reversed, that is, $\TT_{reg}^{u,v} = \TT_{reg}\cup\{(v,u)\}\setminus\{(u,v)\}$. The following lemma will be used crucially in our proofs of the query complexity lower bounds of the Copeland set, the Slater set, and the Markov set.

\begin{lemma}\label{lem:reg_modified}
 The Copeland set, the Slater set, and the Markov set of $\TT_{reg}^{u,v}$ do not contain $u$.
\end{lemma}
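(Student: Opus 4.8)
The plan is to handle the three tournament solutions one at a time, relying on a single structural observation. The tournament $\TT_{reg}^{u,v}$ coincides with the balanced tournament $\TT_{reg}$ except that reversing the edge $(u,v)$ alters only the degrees of $u$ and $v$: in $\TT_{reg}^{u,v}$ the vertex $u$ has out-degree $\nfrac{(n-1)}{2}-1$ and in-degree $\nfrac{(n-1)}{2}+1$, the vertex $v$ has out-degree $\nfrac{(n-1)}{2}+1$ and in-degree $\nfrac{(n-1)}{2}-1$, and every other vertex still has out-degree and in-degree equal to $\nfrac{(n-1)}{2}$.

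For the Copeland set this already finishes the job: the maximum out-degree in $\TT_{reg}^{u,v}$ is $\nfrac{(n-1)}{2}+1$ and it is attained by $v$ alone, so the Copeland set equals $\{v\}$, which does not contain $u$. For the Slater set I would invoke \Cref{lem:slater} directly: the out-degree of $u$ in $\TT_{reg}^{u,v}$ is $\nfrac{(n-1)}{2}-1<\nfrac{(n-1)}{2}$, hence $u$ is not in the Slater set.

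The Markov set needs one extra step. From the description of the chain $\MF(\TT_{reg}^{u,v})$ I would first read off the balance equation that its stationary distribution $\pi$ must satisfy: for every vertex $b$,
\[ \pi(b)\,|\overline{D(b)}| \;=\; \sum_{a\in D(b)}\pi(a), \]
that is, the stationary mass of $b$ scaled by its in-degree equals the total stationary mass of the vertices dominated by $b$ (this identity holds for any stationary distribution, so uniqueness is not needed here). Now suppose $u$ were in the Markov set, so $\pi(u)=\max_{a\in\VV}\pi(a)$; note $\pi(u)>0$ since $\pi$ is a probability distribution on a finite set. Bounding each summand on the right-hand side of the balance equation for $b=u$ by $\pi(u)$ gives $\pi(u)\,|\overline{D(u)}|\le|D(u)|\,\pi(u)$, and dividing by $\pi(u)$ yields $|\overline{D(u)}|\le|D(u)|$. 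But in $\TT_{reg}^{u,v}$ we have $|\overline{D(u)}|=\nfrac{(n-1)}{2}+1>\nfrac{(n-1)}{2}-1=|D(u)|$, a contradiction; hence $u\notin MA(\TT_{reg}^{u,v})$.

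I do not anticipate a genuine obstacle: the argument is degree bookkeeping plus one application of \Cref{lem:slater}, and the only point requiring care is getting the direction of the Markov balance equation right (the mass of $b$ is scaled by its \emph{in}-degree on the left, while the sum on the right runs over the \emph{out}-neighbours of $b$). It is worth recording that the Markov step in fact proves a statement in the spirit of \Cref{lem:slater}: every vertex of the Markov set of an arbitrary tournament has out-degree at least its in-degree, i.e.\ at least $\nfrac{(n-1)}{2}$ in an $n$-vertex tournament; the same lower bound holds for the Copeland and Slater sets, which is exactly why the degree-deficient vertex $u$ is excluded in all three cases.
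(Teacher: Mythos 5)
Your Copeland and Slater arguments coincide with the paper's (degree bookkeeping plus \Cref{lem:slater}), but your treatment of the Markov set takes a genuinely different route. The paper guesses that the stationary distribution is constant on $\VV\setminus\{u,v\}$, writes down a small linear system, invokes ergodicity to argue that the solution of that system is \emph{the} stationary distribution, and reads off $\pi(u)<p<\pi(v)$. You instead derive the general balance identity $|\overline{D(b)}|\,\pi(b)=\sum_{a\in D(b)}\pi(a)$ (which is the correct one for the chain as verbally defined: mass flows from a vertex to its dominators) and apply a maximum-principle argument: bounding each of the $|D(u)|$ summands by $\max_a\pi(a)$ shows that any vertex carrying maximal stationary mass must satisfy $|\overline{D(u)}|\le|D(u)|$, which fails for $u$ in $\TT_{reg}^{u,v}$. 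Your route buys two things: it needs no uniqueness/ergodicity argument (it rules out $u$ under \emph{any} stationary distribution) and it avoids the explicit linear-system computation, which in the paper's version contains coefficient slips (e.g.\ the paper's second equation should read $\pi(u)=p\,\nfrac{(n-3)}{(n+1)}$ for the chain as defined). It also yields the reusable general fact, parallel to \Cref{lem:slater}, that every member of the Markov set of an $n$-vertex tournament has out-degree at least $\nfrac{(n-1)}{2}$. The paper's route, by contrast, pins down the full stationary distribution and hence the stronger conclusion that the Markov set of $\TT_{reg}^{u,v}$ is exactly $\{v\}$, which your argument does not give (and does not need). One small wording fix: $\pi(u)>0$ holds because $\pi(u)$ is the \emph{maximum} of a probability distribution on $n$ points (so $\pi(u)\ge\nfrac{1}{n}$), not merely because $\pi$ is a probability distribution.
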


\begin{proof}
\begin{itemize}
 \item {\bf Copeland set:} For the Copeland set, the result follows from the observation that $u$ is no longer a vertex with highest out-degree in $\TT_{reg}^{u,v}$.

 \item {\bf Slater set:} For the Slater set, the result follows immediately from \Cref{lem:Slater} since the out-degree of $u$ is strictly smaller than $\nfrac{(n-1)}{2}$ in $\TT_{reg}^{u,v}$.

 \item {\bf Markov set:} Follows from the fact that every vertex in the Markov set has out-degree at least $\nfrac{(n-1)}{2}$~\cite{DBLP:journals/siamdm/KimSW17}.\qedhere
\end{itemize}
\end{proof}

We now prove that any algorithm for finding the Copeland set, the Slater set, and the Markov set must query every edge in the input tournament in the worst case.

\begin{theorem}\label{thm:many_st_lb}
 Any algorithm for finding the Copeland set, the Slater set, and the Markov set of tournaments on an odd number of vertices has query complexity ${n\choose 2}$.
\end{theorem}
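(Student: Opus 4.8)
The plan is to use an adversary argument. I want to show that against any algorithm $\AA$ for finding the Copeland set (resp.\ Slater set, Markov set), an adversary can answer the first $\binom{n}{2}-1$ queries in a way that is consistent with $\TT_{reg}$ (for odd $n$; the even case needs a minor separate handling, discussed below), forcing the algorithm to be unable to determine the answer without the final query. The key structural fact is \Cref{lem:neutral}: on $\TT_{reg}$ every neutral tournament solution outputs $\VV$, i.e.\ \emph{all} $n$ vertices. The key ``fragility'' fact is \Cref{lem:reg_modified}: if we reverse a single edge $(u,v)$ of $\TT_{reg}$, then $u$ drops out of each of the three solution sets. So the two tournaments $\TT_{reg}$ and $\TT_{reg}^{u,v}$ have different solution sets, yet they differ in exactly one edge.

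First I would set up the adversary: it maintains the answers of $\TT_{reg}$ as its ``default'' commitment, and answers each query $\{x,y\}$ according to the orientation in $\TT_{reg}$. Suppose, for contradiction, that $\AA$ terminates after querying only some set $Q$ of at most $\binom{n}{2}-1$ edges; let $\{u,v\}$ be an (the) unqueried pair, and without loss of generality say $u \suc_{\TT_{reg}} v$. Then both $\TT_{reg}$ and $\TT_{reg}^{u,v}$ are consistent with every answer the adversary gave (they agree on all edges except the one never queried). But by \Cref{lem:neutral} the correct output on $\TT_{reg}$ contains $u$, while by \Cref{lem:reg_modified} the correct output on $\TT_{reg}^{u,v}$ does not contain $u$. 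Hence whatever set $\AA$ outputs is wrong on at least one of the two inputs, contradicting correctness of $\AA$. Therefore $\AA$ must query all $\binom{n}{2}$ edges on the input $\TT_{reg}$, giving the lower bound; the matching upper bound $\binom{n}{2}$ is trivial since querying every edge suffices.

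The one wrinkle I expect to be the main obstacle is the parity of $n$: the construction $\TT_{reg}$ is defined only for odd $n$, and the regularity (and hence \Cref{lem:neutral}, \Cref{lem:reg_modified}) relies on it. For even $n$ one needs a companion near-regular tournament where every vertex still has out-degree in $\{(n-2)/2,\, n/2\}$ in a sufficiently symmetric way, arrange that the solution set is still all of $\VV$ (or at least a set that provably changes when one suitably chosen edge is flipped), and redo the one-edge-flip analysis; alternatively one can argue the even case reduces to the odd case by adding/removing a dummy vertex with controlled edges. I would note this explicitly and either give the near-regular variant or defer the even case with a short remark. Apart from parity, the argument is a clean two-input indistinguishability argument and there is no heavy computation: the real content was already extracted into \Cref{lem:neutral} and \Cref{lem:reg_modified}.
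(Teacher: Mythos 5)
Your proposal is correct and is essentially the paper's own proof: an adversary answering according to $\TT_{reg}$, combined with \Cref{lem:neutral} and \Cref{lem:reg_modified} to show that any unqueried edge $\{u,v\}$ leaves two consistent tournaments with different solution sets. Your remark about the parity of $n$ is a fair point that the paper itself glosses over, but otherwise the two arguments coincide.
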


\begin{proof}
 Let us consider the tournament $\TT_{reg}$. We observe that, due to \Cref{lem:neutral}, the Copeland set, the Slater set, and the Markov set of $\TT_{reg}$ is the set of all vertices. Let us now consider the oracle which answers the queries according to $\TT_{reg}$. We claim that any algorithm for finding the Copeland set, the Slater set, and the Markov set of tournaments must query all the ${n\choose 2}$ edges of $\TT_{reg}$. Suppose not, then there exists an edge from $u$ to $v$ in $\TT_{reg}$ for some vertices $u$ and $v$ that the algorithm does not query. Let $\hat{\TT}$ be the partial tournament of $\TT_{reg}$ containing the edges that the algorithm queries. If the output of the algorithm does not contain $u$, then the oracle completes $\hat{\TT}$ to $\TT_{reg}$ and thus the algorithm does not output correctly since the output should contain all the vertices. On the other hand, if the output of the algorithm contains $u$ then the oracle completes $\hat{\TT}$ by directing the edge between $u$ and $v$ from $v$ to $u$ and directing the rest of the edges as in $\TT_{reg}$. Again the algorithm outputs wrongly due to \Cref{lem:reg_modified}.
\end{proof}

In the proof of \Cref{thm:many_st_lb}, we have crucially used the assumption that the number of vertices is an odd integer. For tournaments on an even number of vertices, we immediately obtain a lower bound of ${n-1\choose 2}$ by simply constructing a tournament having one vertex which loses to everyone else. However, it may be possible to compute Copeland or Slater or Markov sets on a tournament on an even number of vertices with less than ${n\choose 2}$ queries. We now extend \Cref{thm:many_st_lb} to randomized algorithms using \Cref{yao}.

\begin{theorem}\label{thm:random_many_st_lb}
 Any randomized algorithm for finding the Copeland set, the Slater set, and the Markov set of tournaments on an odd number of vertices has an expected query complexity of ${n\choose 2}$.
\end{theorem}

\begin{proof}
 We present the proof for the Markov set. The proofs for the Copeland set and the Slater set are the same.

 Fix a constant $\varepsilon$ such that $0<\varepsilon<1$. Let $\II:=\{\TT_{reg}\}\cup\{\TT_{reg}^{u,v}:(u,v)\in \EE[\TT_{reg}]\}$ be a set of input instances. Now let us define a distribution on \II. $\TT_{reg}$ is considered as the input instance with probability $1-\varepsilon$. For all $\TT'\in \II\setminus \{\TT_{reg}\}$, $\TT'$ is considered as the input instance with probability $\frac{\varepsilon}{{n\choose 2}}$. Let \AA be a deterministic algorithm that always computes the Markov set. We consider the oracle which answers the queries according to $\TT_{reg}$. We claim that \AA must query all the ${n\choose 2}$ edges of $\TT_{reg}$. Suppose not, then there exists an edge from $u$ to $v$ in $\TT_{reg}$ for some vertices $u$ and $v$ that the algorithm does not query. Then \AA will output the same solution for both $\TT_{reg}$ and $\TT_{reg}^{u,v}$. Due to   \Cref{lem:neutral} and \Cref{lem:reg_modified}, we can conclude that with non-zero probability, the algorithm \AA will output a wrong solution. This is a contradiction to our assumption that \AA is a deterministic algorithm that always computes the correct solution.  Hence, \AA must query all the ${n\choose 2}$ edges of $\TT_{reg}$. The expected query complexity is at least $(1-\varepsilon){n\choose 2}$. Since this holds for any $\varepsilon$ such that $0<\varepsilon<1$, the expected query complexity is ${n\choose 2}$.
\end{proof}

We now present our query complexity lower bound for the bipartisan set of tournaments. Before embarking on the query complexity lower bound of the bipartisan set, let us prove a few structural results for the bipartisan set which we will use crucially later. The following result for the bipartisan set is well known \cite{laffond93,fisher1995tournament}.

\begin{lemma}\label{lem:bps_uc}
 In a tournament $\TT=(\VV,\suc)$, suppose a vertex $u$ covers another vertex $v$, that is, $u\suc w$ whenever $v\suc w$ for every $w\in\VV$. Then $v$ does not belong to the bipartisan set of \TT.
\end{lemma}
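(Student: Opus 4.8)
The plan is to show that if $u$ covers $v$ in $\TT$, then in the defining inequality system \eqref{eqn:bps_cond} for the maximal lottery $p_\TT$, the variable $p_\TT(v)$ can be ``shifted onto'' $u$ without violating any constraint, and then to invoke uniqueness of the maximal lottery to conclude $p_\TT(v)=0$. Concretely, I would start from the skew-adjacency matrix $\GF(\TT)=(g_{ab})$ and recall that $u$ covering $v$ means $D(v)\subseteq D(u)$, equivalently $\bar D(u)\subseteq \bar D(v)$, and moreover $u\suc v$. Translating this into entries of $\GF$: for every $w\in\VV$ we have $g_{uw}\ge g_{vw}$, with strict inequality at $w=u$ (since $g_{uu}=0$ but $g_{vu}=-1$) and at $w=v$ (since $g_{uv}=1$ but $g_{vv}=0$). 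This coordinatewise domination of the ``$u$-row'' over the ``$v$-row'' of the payoff matrix is the structural fact driving everything.

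The key steps, in order, are: (1) Let $p=p_\TT$ be the unique maximal lottery and suppose for contradiction that $p(v)>0$. (2) Define a new lottery $p^\pr$ by $p^\pr(u)=p(u)+p(v)$, $p^\pr(v)=0$, and $p^\pr(a)=p(a)$ for all other $a$; clearly $p^\pr\in\Delta(\VV)$. (3) For each $b\in\VV$ compute $\sum_{a}p^\pr(a)g_{ab} = \sum_a p(a)g_{ab} + p(v)(g_{ub}-g_{vb}) \ge \sum_a p(a)g_{ab} \ge 0$, using $g_{ub}\ge g_{vb}$ from step above and the maximality of $p$. Hence $p^\pr$ also satisfies \eqref{eqn:bps_cond}, i.e.\ $p^\pr$ is also a maximal lottery. (4) Since the maximal lottery is unique, $p^\pr=p$, contradicting $p^\pr(v)=0<p(v)$. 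Therefore $p(v)=0$, i.e.\ $v\notin BP(\TT)$.

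One subtlety I want to handle carefully in step (3): maximality of a lottery $q$ is stated via $\sum_{a,b}q(a)r(b)g_{ab}\ge 0$ for all $r\in\Delta(\VV)$, which is equivalent (by the convexity remark in the excerpt, taking $r$ to range over vertices) to $\sum_a q(a)g_{ab}\ge 0$ for all $b$; I would just cite that equivalence rather than reprove it. The only genuine inequality manipulation is the one-line bound in step (3), and it is immediate once $g_{ub}\ge g_{vb}$ is established. The main obstacle — really the only place needing thought — is getting the direction of the covering/domination inequality exactly right for the skew-adjacency entries (that covering translates to the $u$-row dominating the $v$-row, not the reverse), and confirming that shifting mass toward the ``stronger'' row $u$ only helps every constraint. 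Everything else is bookkeeping plus the uniqueness of the maximal lottery quoted from the preliminaries.
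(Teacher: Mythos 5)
Your proof is correct. Note that the paper itself does not prove this lemma at all---it states it as a well-known fact and cites Laffond et al.\ and Fisher--Ryan---so your self-contained argument is a genuine addition rather than a variant of the paper's reasoning. The argument you give is the standard one: covering translates into coordinatewise domination of the $u$-row of $\GF(\TT)$ over the $v$-row (you correctly check the two diagonal-adjacent entries $w=u$ and $w=v$, using that covering forces $u\suc v$), so transferring the mass $p(v)$ onto $u$ preserves every inequality in \eqref{eqn:bps_cond}; the shifted lottery is then itself maximal, and uniqueness of the maximal lottery forces $p(v)=0$. The one point worth making explicit is that ``uniqueness'' here means there is a \emph{unique} $p\in\Delta(\VV)$ satisfying \eqref{eqn:bps_cond} (which is how the paper states it), since that is exactly what lets you conclude $p^\pr=p$ from the mere fact that $p^\pr$ satisfies the system; with that reading, the contradiction $p^\pr(v)=0<p(v)$ closes the argument. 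The strict inequalities you note at $w=u$ and $w=v$ are not actually needed---the weak domination plus uniqueness already suffices.
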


The following result shows that, in the tournaments where every vertex has the same number of incoming edges as the number of outgoing edges, the bipartisan set is the set of all vertices.

\begin{lemma}\label{lem:bps_same}
 Let \TT be a tournament on $n$ vertices where the in-degree and out-degree of every vertex is $\nfrac{(n-1)}{2}$. Then the only maximal lottery of \TT is the uniform distribution over the set of all vertices of \TT and thus the bipartisan set of \TT is the set of all vertices.
\end{lemma}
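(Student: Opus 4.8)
The plan is to verify directly that the uniform distribution $p = (\nfrac{1}{n}, \ldots, \nfrac{1}{n})$ satisfies the defining condition \eqref{eqn:bps_cond} for a maximal lottery, and then to invoke the known uniqueness of the maximal lottery (stated in the preliminaries) to conclude that it is \emph{the} maximal lottery, whence $BP(\TT) = \VV$. Fix any vertex $b\in\VV$ and consider $\sum_{a\in\VV} p(a) g_{ab} = \nfrac{1}{n}\sum_{a\in\VV} g_{ab}$, where $\GF(\TT) = (g_{ab})$ is the skew-adjacency matrix. By definition of the skew-adjacency matrix, $g_{ab} = 1$ if $a\suc b$, $g_{ab} = -1$ if $b\suc a$, and $g_{bb} = 0$. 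Hence $\sum_{a\in\VV} g_{ab}$ equals (the number of in-neighbours of $b$) minus (the number of out-neighbours of $b$), which is $\nfrac{(n-1)}{2} - \nfrac{(n-1)}{2} = 0$ by the hypothesis that every vertex is balanced. Therefore $\sum_{a\in\VV} p(a) g_{ab} = 0 \ge 0$ for every $b\in\VV$, so $p$ is a maximal lottery.

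Next I would appeal to the fact, already recorded in the excerpt, that every tournament has a \emph{unique} maximal lottery $p_\TT$. Combining this with the previous paragraph gives $p_\TT = p$, the uniform distribution. Since $p_\TT(a) = \nfrac{1}{n} > 0$ for every $a\in\VV$, the support of $p_\TT$ is all of $\VV$, and by the definition $BP(\TT) = \{a\in\VV : p_\TT(a) > 0\}$ we obtain $BP(\TT) = \VV$, as claimed.

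There is essentially no obstacle here; the only point requiring a moment of care is the bookkeeping in the skew-adjacency computation — making sure the sign convention is read off correctly from $\GF(\TT) = \AF(\TT) - \AF(\TT)^t$ so that the column sum of $\GF(\TT)$ at $b$ is indeed $|\overline{D(b)}| - |D(b)|$ — and invoking the uniqueness of the maximal lottery rather than merely exhibiting one feasible point. (Note $n$ is necessarily odd for the hypothesis to be satisfiable, but this is not needed in the argument beyond ensuring $\nfrac{(n-1)}{2}$ is an integer.)
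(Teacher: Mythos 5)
Your proof is correct and follows the same route as the paper: check that the uniform distribution satisfies \Cref{eqn:bps_cond} because each column of the skew-adjacency matrix sums to zero for a balanced tournament, then invoke the uniqueness of the maximal lottery. Your version just spells out the column-sum computation that the paper only asserts.
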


\begin{proof}
 We observe that the uniform distribution over all the vertices satisfies \Cref{eqn:bps_cond} for the tournament \TT, since the average of the entries in every column of the skew-symmetric matrix $\PP$ of the tournament \TT is $0$. Now the result follows from the fact that the maximal lottery in every tournament is unique~\cite{laffond93,fisher1995tournament}.
\end{proof}

In the next lemma, we formalize the intuition that, if a $(\AA, \VV\setminus\AA)$ cut in a tournament has a majority of its edges from $\VV\setminus\AA$ to \AA and \AA is regular, then the bipartisan set of the tournament must include at least one vertex from $\VV\setminus\AA.$

\begin{lemma}\label{lem:heavy}
 Let $\TT = (\VV_1 \cup \VV_2, \EE)$ be a tournament such that there exist at most $(\nfrac{|\VV_1|\cdot|\VV_2|}{2})-1$ edges from $\VV_1$ to $\VV_2$ and the in-degrees and out-degrees of all the vertices in the sub-tournament $\TT[\VV_1]$ of \TT induced on $\VV_1$ are exactly $\nfrac{(|\VV_1|-1)}{2}.$ Then the bipartisan set of \TT must include at least one vertex from $\VV_2.$
\end{lemma}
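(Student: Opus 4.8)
The plan is to argue by contradiction. Suppose the bipartisan set $BP(\TT)$ is contained in $\VV_1$, and let $p=p_\TT$ denote the unique maximal lottery of \TT. Then $p(a)=0$ for every $a\in\VV_2$, and hence $\sum_{a\in\VV_1}p(a)=1$, so $p$ may be viewed as an element of $\Delta(\VV_1)$.

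The first step is to identify $p$ as a maximal lottery of the induced sub-tournament $\TT[\VV_1]$. The skew-adjacency entries $g_{ab}$ for $a,b\in\VV_1$ are the same whether computed in \TT or in $\TT[\VV_1]$, so the inequalities $\sum_{a\in\VV_1}p(a)g_{ab}\ge 0$ for $b\in\VV_1$ are just the subset of the conditions \Cref{eqn:bps_cond} (indexed by $b\in\VV_1$) already satisfied by $p$ in \TT. Since $p\in\Delta(\VV_1)$, this means $p$ is a maximal lottery of $\TT[\VV_1]$. But $\TT[\VV_1]$ has all in-degrees and out-degrees equal to $\nfrac{|\VV_1|-1}{2}$, so by \Cref{lem:bps_same} (together with uniqueness of maximal lotteries) its unique maximal lottery is the uniform distribution on $\VV_1$. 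Therefore $p(a)=\nfrac{1}{|\VV_1|}$ for every $a\in\VV_1$.

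The second step is a counting argument using the conditions \Cref{eqn:bps_cond} at the indices $b\in\VV_2$. For each $b\in\VV_2$ we have $\sum_{a\in\VV_1}p(a)g_{ab}\ge 0$; summing over all $b\in\VV_2$ and writing $d^+(a)$ for the number of edges from $a$ to $\VV_2$, the inner sum $\sum_{b\in\VV_2}g_{ab}$ simplifies to $2d^+(a)-|\VV_2|$, giving $\sum_{a\in\VV_1}p(a)\bigl(2d^+(a)-|\VV_2|\bigr)\ge 0$. Since $p$ is uniform on $\VV_1$ and $\sum_{a\in\VV_1}p(a)=1$, this rearranges to $\sum_{a\in\VV_1}d^+(a)\ge \nfrac{|\VV_1||\VV_2|}{2}$. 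But $\sum_{a\in\VV_1}d^+(a)$ is exactly the number of edges from $\VV_1$ to $\VV_2$, which by hypothesis is at most $\nfrac{|\VV_1||\VV_2|}{2}-1$ --- a contradiction. Hence $BP(\TT)$ must contain a vertex of $\VV_2$.

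The only non-routine point is the first step: checking that the restriction of the global maximal lottery to $\VV_1$ is genuinely a maximal lottery of $\TT[\VV_1]$, which is what lets \Cref{lem:bps_same} pin it down to the uniform distribution. Once uniformity is established, the remainder is a one-line averaging estimate against the edge-count hypothesis. (One could instead avoid invoking uniqueness of maximal lotteries by using the complementary-slackness identity that $\sum_{a}p(a)g_{ab}=0$ for every $b$ in the support of $p$, but routing through \Cref{lem:bps_same} seems cleanest.)
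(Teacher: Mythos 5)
Your proof is correct and uses the same two ingredients as the paper's own argument: \Cref{lem:bps_same} (plus uniqueness of maximal lotteries) to control the restriction of $p_\TT$ to $\VV_1$, and the uniform distribution on $\VV_2$ as the test distribution that turns the cross-edge count into a violation of \Cref{eqn:bps_cond}. The only difference is organizational --- the paper runs a case split (the restriction to $\VV_1$ is uniform / is not uniform) and refutes each case separately, whereas you first pin the restriction down as uniform and then derive the contradiction; this is a slightly cleaner arrangement of the same argument.
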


\begin{proof}
 Let $p^\star\in\Delta(\VV_1 \cup \VV_2)$ be the maximal lottery of \TT and $\VV = \VV_1 \cup \VV_2$. If possible, let us assume that $p^\star(v)=0$ for every $v\in\VV_2.$ Let $q\in\Delta(\VV_2)$ be the uniform distribution on $\VV_2$ and $\PP=(g_{ab})_{a,b\in\VV}$ the skew-adjacency matrix of \TT. We first claim that $p^\star$ cannot be the uniform distribution on $\VV_1.$ For the sake of arriving to a contradiction, let us assume that $p^\star$ is a uniform distribution on $\VV_1$. Then we have
 $$\sum_{a,b\in\VV} p^\star(a)q(b)g_{ab}=\frac{1}{|\VV_1|}\frac{1}{|\VV_2|}\sum_{a,b\in\VV} g_{ab} < 0$$
 since a strict majority of the edges between $\VV_1$ and $\VV_2$ are from $\VV_2$ to $\VV_1$. However, this contradicts the fact that $p^\star$ is the maximal lottery of \TT. Hence, $p^\star$ is not a uniform distribution on $\VV_1$. We now consider the sub-tournament $\TT[\VV_1]$ of \TT induced on $\VV_1$. Due to \Cref{lem:bps_same}, the only maximal lottery of $\TT[\VV_1]$ is the uniform distribution on $\VV_1.$ Hence, since $p^\star$ is not the uniform distribution over $\VV_1,$ there exists a distribution $q^\pr\in\Delta(\VV_1)$ such that $\sum_{a,b\in\VV_1} p^\star(a)q^\pr(b)g_{ab} < 0.$ However, this contradicts our assumption that $p^\star$ is the maximal lottery of \TT. Hence, the bipartisan set of \TT must include at least one vertex from $\VV_2.$
\end{proof}

We now have all the ingredients to present our query complexity lower bound result for the bipartisan set.

\begin{figure}[htbp]
\begin{center}
\begin{tikzpicture}

  \draw[ultra thick] (0,2) ellipse (.6cm and 1.5cm);
  \draw[ultra thick] (2.8,2) ellipse (.6cm and 1.5cm);

  \node at (0,2) {$\AA$};
  \node at (2.8,2) {$\BB$};

  \draw[<-] (2.6,3.5) -- (.2,3.5);
  \draw[<-] (2.2,3) -- (.5,3);
  \draw[<-] (2.1,2.5) -- (.7,2.5);
  \draw[<-] (2.1,2) -- (.7,2);
  \draw[<-] (2.1,1.5) -- (.7,1.5);
  \draw[->] (.5,1) -- (2.2,1);
  \draw[->] (.2,.5) -- (2.5,.5);

  \draw[ultra thick] (7.8,2) ellipse (.8cm and 1.5cm);
  \draw[ultra thick] (10.8,2) ellipse (.8cm and 1.5cm);

  \node at (7.8,2) {$\AA$};
  \node at (10.8,2) {$\BB$};

  \draw[<-] (8.1,3.5) -- (10.5,3.5);
  \draw[<-] (8.5,3) -- (10,3);
  \draw[<-] (8.7,2.5) -- (9.8,2.5);
  \draw[<-] (8.7,2) -- (9.8,2);
  \draw[<-] (8.7,1.5) -- (9.8,1.5);
  \draw[dashed,->] (8.5,1) -- (10,1);
  \draw[dashed,->] (8.1,.5) -- (10.5,.5);
  \end{tikzpicture}
\end{center}
\caption{Schematic diagram of the proof of \Cref{thm:bps_lb}. The oracle pretends to have the tournament on the left if the output of the algorithm contains any vertex from \BB; otherwise the oracle pretends to have the tournament on the right.}\label{fig:bip}
\end{figure}

Unlike \Cref{thm:many_st_lb}, we will only focus on asymptotic bounds instead of exact bounds. Hence, we can assume the number of vertices in the tournament to be even or odd without loss of generality (by adding \OO(1) vertices which lose to everyone else) in our subsequent proofs.

\begin{theorem}\label{thm:bps_lb}
 The query complexity of the bipartisan set of a tournament is $\Omega(n^2)$.
\end{theorem}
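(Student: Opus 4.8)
The plan is to build, for odd $n$, a family of tournaments on which any algorithm is forced to make $\Omega(n^2)$ queries to identify the bipartisan set, using an adversary argument in the same spirit as \Cref{thm:many_st_lb} but with the regular tournament replaced by one that meets the hypotheses of \Cref{lem:heavy}. First I would partition the vertex set as $\VV = \VV_1 \cup \VV_2$ with $|\VV_1|$ odd and $|\VV_1|, |\VV_2|$ both $\Theta(n)$ (say $|\VV_1| = |\VV_2|$ or as close as parity allows), put a balanced regular tournament on $\TT[\VV_1]$ (so every vertex of $\VV_1$ has in- and out-degree $\nfrac{(|\VV_1|-1)}{2}$), put an arbitrary balanced tournament on $\TT[\VV_2]$, and orient \emph{exactly half} (rounded suitably) of the $|\VV_1|\cdot|\VV_2|$ cut edges from $\VV_1$ to $\VV_2$. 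I would choose the base orientation of the cut so that the cut is \emph{exactly balanced} or off-balance by one edge, i.e.\ sitting right at the threshold of \Cref{lem:heavy}. Call this tournament $\TT^\star$.

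The adversary answers all queries according to $\TT^\star$. Suppose an algorithm halts after querying some partial tournament $\hat\TT$ without having queried some cut edge $e$ between $u \in \VV_1$ and $v \in \VV_2$; I claim the algorithm can be fooled. The key point is that $\TT^\star$ can be completed in two ways that differ only in the orientation of $e$, and these two completions have different bipartisan sets. If $e$ points $\VV_1 \to \VV_2$ in $\TT^\star$, then reversing it to $v \to u$ increases the number of $\VV_2 \to \VV_1$ cut edges past $\nfrac{|\VV_1|\cdot|\VV_2|}{2}$; since $\TT[\VV_1]$ is still balanced and regular, \Cref{lem:heavy} forces the bipartisan set of the reversed completion to contain a vertex of $\VV_2$. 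Conversely, in the completion that agrees with $\TT^\star$, I want the bipartisan set to be exactly $\VV_1$ — or at least to exclude the $\VV_2$-vertex that \Cref{lem:heavy} would have forced, so the two completions are distinguishable. Establishing this "other direction" cleanly is the step I expect to be the main obstacle: I need a structural guarantee that when the cut has a (weak) majority of edges from $\VV_1$ to $\VV_2$ \emph{and} $\VV_1$ is internally balanced-regular, no vertex of $\VV_2$ enters the bipartisan set. A natural route is to engineer the cut and the orientation of $\TT[\VV_2]$ so that every vertex of $\VV_2$ is \emph{covered} by some vertex of $\VV_1$ (or by some convex combination argument, dominated in the skew-adjacency sense), invoking \Cref{lem:bps_uc}; alternatively, one shows directly that the uniform distribution on $\VV_1$ is a maximal lottery of that completion via \Cref{eqn:bps_cond}, checking that each column sum of $\GF$ restricted against the uniform-on-$\VV_1$ vector is nonnegative, which holds precisely because of the $\VV_1\to\VV_2$ cut majority and the balance of $\TT[\VV_1]$.

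Granting that dichotomy, the lower bound follows exactly as before: the algorithm's output either contains the relevant $\VV_2$-vertex or not, and in whichever case the adversary picks the completion making the output wrong. Hence the algorithm must query every cut edge between $\VV_1$ and $\VV_2$, and there are $|\VV_1|\cdot|\VV_2| = \Omega(n^2)$ of them, giving query complexity $\Omega(n^2)$. The remaining bookkeeping — handling parity of $n$ and of $|\VV_1|$, and making sure the "exactly half" cut can be realized edge by edge so that each single reversal flips the side of the threshold — is routine and I would defer it to the formal write-up. I would set up the sizes so that $\lfloor |\VV_1|\cdot|\VV_2|/2\rfloor$ edges go one way, so that reversing any one remaining $\VV_1\to\VV_2$ edge lands strictly on the $\VV_2$-heavy side required by \Cref{lem:heavy}.
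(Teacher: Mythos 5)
Your overall adversary structure and the two key lemmas (\Cref{lem:heavy} for the ``$\VV_2$-heavy'' completion and \Cref{lem:bps_uc} for the other side) are exactly the right ingredients, but the step you flag as ``the main obstacle'' is a genuine gap, and neither of your proposed routes around it can work with your static base tournament $\TT^\star$. If the cut of $\TT^\star$ carries only $\lfloor \nfrac{|\VV_1|\cdot|\VV_2|}{2}\rfloor$ edges from $\VV_1$ to $\VV_2$, then it is \emph{impossible} for every $b\in\VV_2$ to lose to a strict majority of $\VV_1$: that would require at least $|\VV_2|\cdot\nfrac{(|\VV_1|+1)}{2} > \nfrac{|\VV_1|\cdot|\VV_2|}{2}$ such edges. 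Hence some column $b$ of the skew-adjacency matrix has $\sum_{a\in\VV_1}p(a)g_{ab}<0$ against the uniform distribution $p$ on $\VV_1$, so uniform-on-$\VV_1$ is \emph{not} a maximal lottery of the threshold completion; the covering route fails for the same reason (covering is a per-vertex condition far stronger than an aggregate cut majority). Worse, with both parts balanced and the cut balanced, the whole tournament is nearly regular, and by the spirit of \Cref{lem:bps_same} the bipartisan set of the ``stay'' completion will typically already contain vertices of $\VV_2$ --- and since \Cref{lem:heavy} only guarantees that \emph{some} vertex of $\VV_2$ appears in the reversed completion without saying which, you cannot even identify a single vertex whose membership provably differs between your two completions. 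So the dichotomy you need does not hold.

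The fix is to make the adversary's \emph{answers}, not the base tournament, carry the asymmetry: answer every queried cut edge as oriented from $\VV_1$ to $\VV_2$. Then if the algorithm stops after fewer than $\nfrac{|\VV_1|\cdot|\VV_2|}{2}$ cut queries, the adversary retains both extremes. If the output contains a $\VV_2$-vertex, it orients \emph{all} remaining cut edges from $\VV_1$ to $\VV_2$; since every answered cut edge already points that way, every $a\in\VV_1$ now covers every $b\in\VV_2$ and \Cref{lem:bps_uc} excludes all of $\VV_2$. If the output contains no $\VV_2$-vertex, it orients all unqueried cut edges from $\VV_2$ to $\VV_1$, leaving at most $\nfrac{|\VV_1|\cdot|\VV_2|}{2}-1$ edges from $\VV_1$ to $\VV_2$, and \Cref{lem:heavy} forces a $\VV_2$-vertex into the bipartisan set. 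Your version pre-commits half the cut edges to the $\VV_2\to\VV_1$ direction inside $\TT^\star$, which destroys the ``all one way'' completion needed for the covering half of the dichotomy. With that change (and keeping your balanced regular $\TT[\VV_1]$, which is needed for \Cref{lem:heavy}), the argument goes through and yields the $\Omega(n^2)$ bound.
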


\begin{proof}
 We refer to \Cref{fig:bip} for a schematic diagram of the proof. Let $n_1$ be an odd integer. We consider a partial tournament $\TT = (\AA\cup\BB, \EE)$ where $\AA = \{a_i : i\in[n_1]\}, \BB = \{b_i : i\in[n_1]\},$ and $\EE = \{(a_i, a_{i+j \pmod {n_1}}), (b_i, b_{i+j \pmod {n_1}}): i\in[n_1], 1\le j\le \nfrac{(n_1-1)}{2}\}.$ The oracle answers the queries of the algorithm as follows. If a query comes for an edge between vertices $a_i$ and $a_j$ or $b_i$ and $b_j$ for any $i, j\in[n_1]$, then the oracle answers according to \TT. If a query comes for an edge between $a_i$ and $b_j$ for any $i,j\in[n_1]$, then the oracle says that the edge is oriented from $a_i$ to $b_j$. We now claim that the algorithm must query at least $\nfrac{n^2_1}{2}$ edges between \AA and \BB. Suppose not, then, if the output of the algorithm contains any vertex from \BB, then the oracle orients every edge between \AA and \BB, from \AA to \BB, thereby ensuring that the output of the algorithm is wrong due to \Cref{lem:bps_uc}. On the other hand, if the output of the algorithm does not contain any vertex from \BB, then the oracle orients all the edges between \AA and \BB that are not queried, from \BB to \AA, thereby ensuring that the output of the algorithm is again wrong due to \Cref{lem:heavy}. Hence, the algorithm must query at least $\nfrac{n^2_1}{2}$ edges between \AA and \BB and thus the query complexity of the bipartisan set is $\Omega(n^2).$
\end{proof}

We now extend \Cref{thm:bps_lb} to randomized algorithms.

\begin{theorem}\label{thm:random_bps_lb}
 The expected query complexity of any randomized algorithm to compute the bipartisan set of a tournament on an even number of vertices is $\Omega(n^2)$.
\end{theorem}

\begin{proof}
 Let $n_1$ be any odd integer, $\TT = (\AA\cup\BB, \EE)$ a tournament where $\AA = \{a_i : i\in[n_1]\}, \BB = \{b_i : i\in[n_1]\},$ and $\EE = \{(a_i, a_{i+j \pmod {n_1}}), (b_i, b_{i+j \pmod {n_1}}): i\in[n_1], 1\le j\le \nfrac{(n_1-1)}{2}\}\cup\{(a_i,b_j):i\in[n_1],j\in[n_1]\}.$ Let $\SS\subseteq [n_1]\times [n_1]$. Then $\TT_\SS =(\AA\cup\BB, \EE')$ is a tournament where $\EE'=\EE\setminus\{(a_i,b_j):(i,j)\in \SS\}\cup\{(b_j,a_i):(i,j)\in \SS\}$. Note that $\TT_\emptyset=\TT$. Let $\II:=\{\TT_{\SS}:\SS\subseteq [n_1]\times [n_1]\}$ be a set of input instances. Now let us define a distribution on \II. $\TT$ is considered as the input instance with probability $\frac{1}{2}$. For all $\TT'\in \II\setminus \{\TT\}$, $\TT'$ is considered as the input instance with probability $\frac{1}{2(2^{n^2_1}-1)}$. Let $\AA_1$ be a deterministic algorithm that always computes the bipartisan set.

We claim that $\AA_1$ must query at least $n_1^2/2$ edges of $\TT$. Suppose not, then there is a non-empty set $\SS'\subseteq [n_1]\times[n_1]$ such that the algorithm does not query the edges $\{(a_i,b_j):(i,j)\in \SS'\}$. Note that $|\SS'|> n^2_1/2$. Then $\AA_1$ will output the same solution for both $\TT$ and $\TT_{\SS'}$. Due to  \Cref{lem:bps_uc} and  \Cref{lem:heavy}, $\TT$ and $\TT_{\SS'}$ have different tournament solutions. Therefore, we can conclude that with non-zero probability, the algorithm $\AA_1$ will output a wrong solution. This is a contradiction to our assumption that $\AA_1$ is a deterministic algorithm that always computes the correct solution.  Hence, $\AA_1$ must query at least $n^2_1/2$ edges of $\TT$. Also recall that $\TT$ is considered as an input instance for $\AA_1$ with probability $\frac{1}{2}$. The expected query complexity is at least $\Omega(n^2)$.
\end{proof}

\begin{figure}[htbp]
\begin{center}
\begin{tikzpicture}

  \draw[ultra thick] (0,2) ellipse (.6cm and 1.5cm);
  \draw[ultra thick] (4.8,2) ellipse (.6cm and 1.5cm);
  \draw[ultra thick] (2.4,2) circle (.5cm) node (c) {};

  \node at (0,2) {$\AA$};
  \node at (4.8,2) {$\BB$};
  \node at (2.4,2) {$x$};

  \draw[<-] (2,2.4) -- (.2,3.5);
  \draw[<-] (1.9,2.3) -- (.5,3);
  \draw[<-] (1.9,2.2) -- (.7,2.5);
  \draw[<-] (1.9,2.1) -- (.7,2);
  \draw[<-] (1.9,1.9) -- (.7,1.5);
  \draw[->] (.5,1) -- (1.9,1.8);
  \draw[->] (.2,.5) -- (2,1.6);

  \draw[->] (2.8,2.4) -- (4.4,3.3);
  \draw[->] (2.9,2.3) -- (4.2,2.8);
  \draw[->] (2.9,2.2) -- (4.1,2.5);
  \draw[->] (2.9,2.1) -- (4.1,2);
  \draw[->] (2.9,1.9) -- (4.1,1.5);
  \draw[<-] (4.1,1) -- (2.9,1.8);
  \draw[<-] (4.4,.5) -- (2.9,1.6);
  \end{tikzpicture}
\end{center}
\caption{Schematic diagram of the proof of \Cref{thm:ucs_lb}. All the edges between \AA and \BB need to be queried to find whether $x$ is uncovered or not.}\label{fig:ucs}
\end{figure}

We now show that the query complexity of the uncovered set is $\Omega(n^2)$.

\begin{theorem}\label{thm:ucs_lb}
 The query complexity of the uncovered set of a tournament is $\Omega(n^2)$.
\end{theorem}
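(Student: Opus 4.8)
The plan is to mimic the adversary argument used for the bipartisan set in \Cref{thm:bps_lb}, but to replace \Cref{lem:bps_uc} and \Cref{lem:heavy} with the analogous facts about the covering relation. Recall that a vertex $v$ is \emph{not} in the uncovered set exactly when some vertex $u$ covers it, i.e.\ $D(v)\subseteq D(u)$. So I want a base partial tournament in which a designated block $\BB$ of vertices can be ``killed'' by reorienting edges, while if the algorithm does not probe enough edges, the adversary can also force some vertex of $\BB$ to survive.

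Concretely, I would reuse the construction from \Cref{thm:bps_lb}: take $n$ odd, let $\AA=\{a_i:i\in[n]\}$ and $\BB=\{b_i:i\in[n]\}$, with the regular tournament $\TT_{reg}$ structure inside $\AA$ and inside $\BB$ (so every vertex has in- and out-degree $\nfrac{(n-1)}{2}$ within its block). The oracle answers all intra-block queries according to this fixed structure, and answers every queried cross edge ``from $a_i$ to $b_j$''. I claim the algorithm must query at least $\Omega(n^2)$ of the $n^2$ cross edges. Suppose it queries fewer than, say, $\nfrac{n^2}{4}$ of them and outputs a set $S$. \textbf{Case 1:} $S$ contains some $b_j\in\BB$. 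Then the adversary completes the tournament by orienting every remaining cross edge from $\AA$ to $\BB$; now every $a\in\AA$ dominates all of $\BB$, so $D(b_j)\subseteq\BB\setminus\{b_j\}\subsetneq D(a_i)$ for any fixed $a_i$ — actually $D(b_j)\subseteq D(a_i)$ since $a_i$ dominates all of $\BB$ and $b_j$ dominates only part of $\BB$ — hence $a_i$ covers $b_j$ and $b_j\notin UC(\TT)$, contradiction. \textbf{Case 2:} $S\cap\BB=\emptyset$. Here I need the adversary to be able to force some $b_j$ into the uncovered set while staying consistent with all answered queries. The natural move is: since fewer than $\nfrac{n^2}{4}$ cross edges were queried, by averaging there is some $b_j$ for which at most $\nfrac{n}{2}$ (say) of the edges $\{a_i,b_j\}$ were queried; for that $b_j$, orient all \emph{unqueried} cross edges incident to $b_j$ from $b_j$ into $\AA$, and orient the rest of the unqueried cross edges arbitrarily (e.g.\ from $\AA$ to $\BB$). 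Then $b_j$ dominates a strict majority of $\AA$ plus $\nfrac{(n-1)}{2}$ vertices of $\BB$, so $|D(b_j)|$ is large; the point is to argue no vertex can cover $b_j$. A vertex $a\in\AA$ cannot cover $b_j$ because $b_j$ dominates some vertices of $\AA$ (namely the unqueried neighbours, all oriented into $\AA$) that $a$ does not dominate — as long as we pick $b_j$'s unqueried-neighbour set so that it is not contained in any single $D(a)$, which holds automatically once $b_j$ dominates at least one vertex outside each $D(a)$; and a vertex $b\in\BB$ cannot cover $b_j$ because within $\BB$ the structure is regular so no vertex covers another there, and $b_j$ dominates vertices in $\AA$ that $b$ may not. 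Making this last step airtight — choosing which cross edges to flip so that provably \emph{no} vertex covers $b_j$ — is the part I would have to be careful about.

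The main obstacle, then, is exactly Case 2: unlike the bipartisan-set argument, where \Cref{lem:heavy} gives a clean ``one vertex of $\BB$ must be selected'' guarantee from a global counting condition, covering is a \emph{pairwise} condition and the adversary must simultaneously defeat all $2n-1$ potential coverers of $b_j$. I expect the cleanest fix is to choose $b_j$ so that it dominates a carefully chosen spread-out set of $\AA$-vertices — for instance, guarantee (again by averaging over the few queried edges) a $b_j$ together with two vertices $a_p, a_q\in\AA$ that $b_j$ dominates via unqueried edges, with $a_p\notin D(a)$ for half the $a$'s and $a_q\notin D(a)$ for the other half, using the rotational structure of $\TT_{reg}$ on $\AA$; then no single vertex of $\AA$ dominates both $a_p$ and $a_q$, so none can cover $b_j$, and the $\BB$-side is handled by regularity. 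Once Case 2 is nailed down, the conclusion is immediate: any correct algorithm must query $\Omega(n^2)$ cross edges, so the query complexity of the uncovered set is $\Omega(n^2)$.
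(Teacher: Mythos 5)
Your construction is genuinely different from the paper's, and it can be made to work. The paper does not reuse the bipartisan gadget: it adds a single pivot vertex $x$ with $\AA\succ x\succ \BB$ and asks only whether $x$ is uncovered. Since $D(x)=\BB$, the vertex $x$ is covered if and only if some $a_i$ dominates \emph{all} of $\BB$, so the entire covering relation collapses to one monotone condition per $a_i$; an adaptive oracle (which grants $b_j\succ a_i$ exactly when $a_i$ has already been seen to beat every other $b_k$) then forces all $n^2$ cross queries. Your route instead argues directly about whether vertices of $\BB$ are uncovered, which means the adversary must defeat all $2n-1$ potential coverers of a single $b_j$ rather than control one pivot. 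That is why your Case 2 feels harder: covering is pairwise, and the paper's pivot trick is precisely the device that avoids this. What your approach buys is a non-adaptive oracle and direct reuse of the $\TT_{reg}\cup\TT_{reg}$ gadget from \Cref{thm:bps_lb}; what it costs is a constant factor in the bound (still $\Omega(n^2)$) and the extra bookkeeping below.

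Your Case 2, as written, is not airtight, but the fix is simpler than the ``two spread-out vertices'' device you propose. If fewer than $\nfrac{n^2}{4}$ cross edges are queried, pick $b_j$ with at most $\nfrac{n}{4}$ queried cross edges and flip all of its unqueried cross edges into $\AA$, orienting every other unqueried cross edge from $\AA$ to $\BB$. Then $|D(b_j)\cap\AA|\ge \nfrac{3n}{4} > \nfrac{(n-1)}{2} = |D(a)\cap\AA|$ for every $a\in\AA$, so by cardinality alone $D(b_j)\not\subseteq D(a)$ and no $a$ covers $b_j$; and no $b\in\BB\setminus\{b_j\}$ covers $b_j$ because $D(b)\cap\AA=\emptyset$ while $D(b_j)\cap\AA\ne\emptyset$. (Your antipodal-pair argument also works, since a set of more than $\nfrac{(n-1)}{2}$ positions cannot lie in any window of $\nfrac{(n-1)}{2}$ consecutive positions of the cyclic order on $\AA$, but the counting argument makes it unnecessary.) Case 1 is fine as you state it. With that repair your proof is complete.
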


\begin{proof}
 We refer to \Cref{fig:ucs} for a schematic diagram of the proof. Consider a partial tournament $\TT = (\AA\cup\BB\cup\{x\}, \EE)$ where $n_1$ is any integer, $\AA = \{a_i : i\in[n_1]\}, \BB = \{b_i : i\in[n_1]\}$, and $\EE = \{(a_i, x), (x,b_i): i\in[n_1]\}$. Let us consider the following oracle. We define the partial tournament $\TT^\pr$ to be the graph on $\AA\cup\BB\cup\{x\}$ consisting of all the answers of the oracle till now. Hence, to begin with, $\TT^\pr$ does not have any edge. The oracle answers the queries for any edge in \TT according to \TT. For any query of the form $\{a_i, a_j\}$ or $\{b_i, b_j\}$, the oracle answers arbitrarily but consistently. For a query $\{a_i, b_j\}$ for some $i,j\in[n_1]$, if $a_i$ defeats $b_k$ for every $k\in[n_1]\setminus\{j\}$ in $\TT^\pr$, then the oracle answers that $b_j$ defeats $a_i$; otherwise the oracle answers that $a_i$ defeats $b_j$.

 We claim that any algorithm for finding the uncovered set of tournaments must query for the pair $\{a_i, b_j\}$ for every $i,j\in[n_1]$. Suppose not, then there exists a pair $\{a_i, b_j\}$ which the algorithm does not query. Notice that, by the design of the oracle, $a_i$ defeats $b_k$ in $\TT^\pr$ for every $k\in[n_1]$ such that $\{a_i, b_k\}$ has been queried by the algorithm. For every pair $\{a_t, b_\el\}$ with $t,\el\in[n_1], t\ne i$ and $\{a_t, b_\el\}$ has not been queried, the oracle orients the edge from $b_\el$ to $a_t$. The oracle also orients all the edges from $a_i$ to $b_r$ for every $r\in[n_1]\setminus\{j\}$. Now, if the output of the algorithm contains $x$, then the oracle orients the edge $\{a_i, b_j\}$ from $a_i$ to $b_j$. Notice that, in this case, $x$ is covered by $a_i$ and thus $x$ should not be in the uncovered set and hence, the output of the algorithm is wrong. On the other hand, if the output of the algorithm does not contain $x$, then the oracle orients the edge $\{a_i, b_j\}$ from $b_j$ to $a_i$. In this case, $x$ is not covered by any other vertex and thus $x$ belongs to the uncovered set. Hence, the algorithm outputs incorrectly in both the cases, thereby proving the result.
\end{proof}

\begin{figure}[htbp]
\begin{center}
\begin{tikzpicture}

  \draw[ultra thick] (0,2) ellipse (.6cm and 1.5cm);
  \draw[ultra thick] (4.8,2) ellipse (.6cm and 1.5cm);
  \draw[ultra thick] (2.4,2) circle (.5cm) node (c) {};

  \node at (0,2) {$\AA$};
  \node at (5.8,2) {$\BB$};
  \node at (4.8,3) {$b_1$};
  \node at (4.8,2.5) {$b_2$};
  \node at (4.8,2.1) {$\cdot$};
  \node at (4.8,1.7) {$\cdot$};
  \node at (4.8,1.2) {$\cdot$};
  \node at (4.8,.8) {$b_n$};
  \node at (2.4,2) {$x$};

  \draw[<-] (2,2.4) -- (.2,3.5);
  \draw[<-] (1.9,2.3) -- (.5,3);
  \draw[<-] (1.9,2.2) -- (.7,2.5);
  \draw[<-] (1.9,2.1) -- (.7,2);
  \draw[<-] (1.9,1.9) -- (.7,1.5);
  \draw[->] (.5,1) -- (1.9,1.8);
  \draw[->] (.2,.5) -- (2,1.6);

  \draw[->] (2.8,2.4) -- (4.4,3.3);
  \draw[->] (2.9,2.3) -- (4.2,2.8);
  \draw[->] (2.9,2.2) -- (4.1,2.5);
  \draw[->] (2.9,2.1) -- (4.1,2);
  \draw[->] (2.9,1.9) -- (4.1,1.5);
  \draw[<-] (4.1,1) -- (2.9,1.8);
  \draw[<-] (4.4,.5) -- (2.9,1.6);
  \end{tikzpicture}
\end{center}
\caption{Schematic diagram of the proof of \Cref{thm:banks_lb}. $b_1, b_2, \ldots, b_n$ is a transitive order of the tournament induced on \BB. All the edges between \AA and \BB need to be queried to find whether the Banks set of the input tournament contains $x$ or not.}\label{fig:banks}
\end{figure}

Next we consider the Banks set and show its query complexity to be $\Omega(n^2).$

\begin{theorem}\label{thm:banks_lb}
 The query complexity of the Banks set of a tournament is $\Omega(n^2)$.
\end{theorem}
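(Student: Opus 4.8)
The plan is to adapt the adversary argument used for the uncovered set (\Cref{thm:ucs_lb}) to a gadget whose ``lower part'' is rigidly transitive, so that membership of a distinguished vertex in the Banks set collapses to a covering-type condition. Take the partial tournament on $\{x\}\cup\AA\cup\BB$ with $\AA=\{a_i:i\in[n]\}$ and $\BB=\{b_i:i\in[n]\}$, where the oracle will always answer $a_i\suc x$ and $x\suc b_j$ for all $i,j$, will answer $b_i\suc b_j$ whenever $i<j$ (so that $\TT[\BB]$ is the transitive tournament $b_1\suc b_2\suc\cdots\suc b_n$), and will answer queries inside $\AA$ consistently (say, transitively as well); the only genuinely undetermined edges are the $n^2$ edges between $\AA$ and $\BB$.

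The first step is the structural claim that, for every completion of the $\AA$--$\BB$ edges consistent with the above, $x\in BA(\TT)$ if and only if no $a_i$ beats every vertex of $\BB$. Indeed, a transitive sub-tournament having $x$ as its maximal element must be $\{x\}\cup S$ with $S\subseteq D(x)=\BB$ and $\TT[S]$ transitive; such a set is a maximal transitive sub-tournament of $\TT$ exactly when (i) $S$ is a maximal transitive sub-tournament of $\TT[\BB]$ (since $x$ beats all of $\BB$, adjoining a vertex of $\BB\setminus S$ preserves transitivity iff it does so inside $\TT[\BB]$), and (ii) no $a_i\in\overline{D(x)}=\AA$ beats all of $S$ (if $a_i$ beats all of $S$ it becomes the new maximal element and transitivity is preserved; if some $s\in S$ has $s\suc a_i$ then $a_i\suc x\suc s\suc a_i$ is a $3$-cycle, so $a_i$ cannot be adjoined). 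As $\TT[\BB]$ is transitive, $\BB$ is its unique maximal transitive sub-tournament, so (i)--(ii) say precisely ``no $a_i$ beats all of $\BB$''; and if some $a_i$ does beat all of $\BB$, then no maximal transitive sub-tournament of $\TT$ has $x$ as its maximal element, whence $x\notin BA(\TT)$.

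With this in hand, the oracle behaves as in \Cref{thm:ucs_lb}: an $\AA$--$\BB$ query $\{a_i,b_j\}$ is answered $a_i\suc b_j$, unless $a_i$ has already been queried against every $b_k$ with $k\neq j$ and all those answers were $a_i\suc b_k$, in which case it is answered $b_j\suc a_i$. Suppose an algorithm stops while some edge $\{a_i,b_j\}$ is unqueried. If its output contains $x$, the oracle orients every still-unqueried $\AA$--$\BB$ edge incident to $a_i$ (including $\{a_i,b_j\}$) as $a_i\suc b_k$ and completes the remaining undetermined edges arbitrarily; since $a_i$ was not fully queried, the exception never fired for $a_i$, so $a_i$ now beats all of $\BB$ and hence $x\notin BA(\TT)$, contradicting the output. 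If its output does not contain $x$, the oracle orients every still-unqueried $\AA$--$\BB$ edge as $b_k\suc a_\ell$; then every $a_\ell$ has a loss to some $b_k$ --- either through a still-unqueried edge, or, if $a_\ell$ was fully queried, through its last queried $\BB$-edge, which the exception rule answered $b_k\suc a_\ell$ --- so no $a_\ell$ beats all of $\BB$ and hence $x\in BA(\TT)$, again a contradiction. Therefore every one of the $n^2$ edges between $\AA$ and $\BB$ must be queried, and since $\TT$ has $2n+1$ vertices the query complexity of the Banks set is $\Omega(n^2)$.

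The part I expect to need the most care is the structural equivalence: showing that a maximal transitive sub-tournament of $\TT$ with maximal element $x$ is necessarily $\{x\}\cup\BB$ (using that a proper subset of $\BB$ is never a maximal transitive sub-tournament of the transitive $\TT[\BB]$), and pinning down via the $3$-cycle observation exactly when an $\AA$-vertex may be adjoined. Once these are established, the oracle bookkeeping --- including the verification that a fully queried $a_\ell$ necessarily has a loss in $\BB$ --- is essentially a verbatim copy of the uncovered-set argument.
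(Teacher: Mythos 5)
Your proof is correct and follows essentially the same route as the paper: the same gadget $\{x\}\cup\AA\cup\BB$ with $x$ beaten by all of \AA and beating all of a transitive \BB, the same adaptive oracle rule for the $\AA$--$\BB$ queries, and the same adversary completion, with membership of $x$ in the Banks set reduced to whether some $a_i$ dominates all of \BB. If anything your write-up is the more careful one: making the equivalence ``$x\in BA(\TT)$ iff no $a_i$ beats all of \BB'' explicit, and orienting \emph{all} still-unqueried $\AA$--$\BB$ edges from \BB to \AA in the case where the output excludes $x$, is exactly what guarantees that no partially-queried $a_\ell$ ends up dominating \BB (the paper's text orients the unqueried edges for $k\ne i$ from $a_k$ to $b_\ell$, which would let such an $a_k$ dominate \BB and void the contradiction in that case).
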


\begin{proof}
 We refer to \Cref{fig:banks} for a schematic diagram of the proof. Consider a partial tournament $\TT = (\AA\cup\BB\cup\{x\}, \EE),$ where $n_1$ is any integer, $\AA = \{a_i : i\in[n_1]\}, \BB = \{b_i : i\in[n_1]\}$, and $\EE = \{(a_i, x), (x,b_i):i\in[n_1]\}\cup\{(b_i, b_j): i,j \in[n_1], i<j\}.$ Let us consider the following oracle. Let us define the partial tournament $\TT^\pr$ to be the graph on $\AA\cup\BB\cup\{x\}$ consisting of all the answers of the oracle till now. Hence, to begin with, $\TT^\pr$ does not have any edge. The oracle answers the queries for any edge in \TT according to \TT. For any query of the form $\{a_i, a_j\}$, the oracle answers arbitrarily but consistently. For a query $\{a_i, b_j\}$ for some $i,j\in[n_1]$, if $a_i$ defeats $b_k$ for every $k\in[n_1]\setminus\{j\}$ in $\TT^\pr$, then the oracle answers that $b_j$ defeats $a_i$; otherwise the oracle answers that $a_i$ defeats $b_j$.

 We claim that the algorithm must query for the pair $\{a_i, b_j\}$ for every $i,j\in[n_1]$. Suppose not, then there exists a pair $\{a_i, b_j\}$ which the algorithm does not query. Notice that, by the design of the oracle, $a_i$ defeats $b_k$ in $\TT^\pr$ for every $k\in[n_1]$ such that $\{a_i, b_k\}$ has been queried by the algorithm. For every pair $\{a_t, b_\el\}$ with $t,\el\in[n_1], t\ne i$ and $\{a_t, b_\el\}$ has not been queried, the oracle orients the edge from $b_\el$ to $a_t$. The oracle also orients all the edges not in $\TT^\pr$ between $a_i$ and $b_r$ from $a_i$ to $b_r$ for every $r\in[n_1]\setminus\{j\}$.

 Now if the output of the algorithm contains $x$, then the oracle orients the edge between $a_i$ and $b_j$ from $a_i$ to $b_j$. We claim that $x$ can not be the maximum vertex of any maximal transitive sub-tournament $\TT^\prr$ of \TT. To see this, we first observe that the sub-tournament $\TT^\prr$ must have all the vertices in \BB and no vertex from \AA. Indeed, otherwise either $x$ is not the maximum vertex of $\TT^\prr$ (if any vertex from \AA is there in $\TT^\prr$) or $\TT^\prr$ is not a maximal transitive sub-tournament (if any vertex from \BB is not there in $\TT^\prr$). However, such a sub-tournament is not a maximal transitive sub-tournament since $a_i$ can be added to $\TT^\prr$ without violating transitivity. Hence, $x$ does not belong to the Banks set of the resulting tournament and thus the algorithm's output is wrong.

 On the other hand, suppose the output of the algorithm does not contain $x$. Then the oracle orients the edge between $a_i$ and $b_j$ from $b_j$ to $a_i$. In this case, the sub-tournament of \TT induced on $\BB\cup\{x\}$ is a maximal sub-tournament where $x$ is the maximum vertex and thus $x$ belongs to the Banks set of the resulting tournament. Hence, the algorithm outputs incorrectly in both the cases, thereby proving the result.
\end{proof}

We now extend \Cref{thm:ucs_lb,thm:banks_lb} to randomized algorithms.

\begin{theorem}\label{thm:random_ucs_bs_lb}
The expected query complexity of any randomized algorithm to compute the uncovered set and Banks set of a tournament is $\Omega(n^2)$.
\end{theorem}

\begin{proof}
 The argument for both the uncovered set and Banks set are the same. So we present them together. Let $\AA_1$ be a deterministic algorithm that always computes the uncovered set (or Banks set). Let $\TT=(\AA\cup\BB\cup\{x\}, \EE)$ be a partial tournament where $\AA = \{a_i : i\in[n_1]\}, \BB = \{b_i : i\in[n_1]\}$ and $\EE = \{(a_i, x), (x,b_i), (a_i,a_j),(b_i, b_j): i,j \in[n_1], i>j\}$. Let $\II_0$ be the set of all tournaments that contain the partial tournament $\TT$, and for all $i\in [n_1]$, there is exactly one $j\in [n_1]$ such that there is an edge from $b_j$ to $a_i$. For all $k\in[n_1]$, let $\II_k$ be the set of all tournaments that
 \begin{itemize}
 	\item contain the partial tournament $\TT$,
 	\item for all $i\in [n_1]\setminus \{k\}$, there is exactly one $j\in [n_1]$ such that there is an edge from $b_j$ to $a_i$, and
 	\item for all $j\in[n_1]$, there is an edge from $a_k$ to $b_j$.
 \end{itemize}
 Clearly $|\II_0|=n_1^{n_1}$, and for all $k\in [n_1]$, $|\II_k|=n_1^{n_1-1}$ Let $\II:=\bigcup_{i=0}^{n}\II_i$ be the set of input instances for $\AA_1$; $|\II|=\sum_{k=0}^n|\II_k|=2n_1^{n_1}$. Now let us define a distribution on \II. For all $\TT'\in\II$, $\TT'$ is considered as an input instance for $\AA_1$ with probability $\frac{1}{2n_1^{n_1}}$.

 Let $\II'$ denote the input instance and $\QQ$ be the number of queries made by $\AA_1$ on the input. Now we show that $\mathbb{E}[\QQ|\II'\in \II_0]$ is $\Omega(n^2)$. If $\II'\in \II_0$, then the algorithm has to query all the edges from $\BB$ to $\AA$. Suppose an edge $(b_j,a_i)$ is not queried. Then there is an instance $\II''\in \II\setminus\{\II'\}$ which we get by reversing the edge $(b_j,a_i)$ in $\II'$. $\AA_1$ outputs the same solution for both $\II'$ and $\II''$. As $\II'$ and $\II''$ have different tournament solutions (refer to the proof of \Cref{thm:ucs_lb,thm:banks_lb}), we can conclude that with non-zero probability, the algorithm $\AA_1$ will output a wrong solution. This is a contradiction to our assumption that $\AA_1$ is a deterministic algorithm that always computes the correct solution.  Hence, $\AA_1$ must query all the edges from $\BB$ to $\AA$ in the worst case.

 We next show that the expected query complexity of $\AA_1$ is $\Omega(n^2)$ when the input is chosen uniformly randomly from \II. Towards that, we introduce a few notation to carefully compute $\mathbb{E}[\QQ|\II'\in \II_0]$. Let $\II'\in \II_0$ be the input instance (chosen randomly) and $X_i$ a random variable which denotes the number of edges queried from $a_i$ to \BB by $\AA_1$ until it queries an edge from \BB to $a_i$. Let $Y_i^j$ be an indicator variable which is $1$ when the $j$-th edge to be queried between $a_i$ and \BB is an edge from \BB to $a_i$. Let $\HH_i^j$ be the set of all possible sequences $e_1,\ldots,e_{j-1}$ of $(j-1)$ edges between $a_i$ and $\BB$ which the algorithm $\AA_1$ queries for the instances in $\II_0$ before querying an edge between $a_i$ and \BB for the $j$-th time such that none of the edges in $\{e_1,\ldots,e_{j-1}\}$ is from \BB to $a_i$. Let $\HH\in \HH_i^j$ be any such sequence. After querying the sequence of edges $\HH$, let $\AA_1$ query an edge between $a_i$ and $b_{k_{\HH}}$ where $k_{\HH}$ depends only on $\HH$ as $\AA_1$ is deterministic. Now observe that the total number of instances in $\II_0$ which contain the edges $\HH$ is $(n_1-j+1)$ times the total number of instances in $\II_0$ which contain the edges $\HH\cup\{(b_{k_{\HH}},a_i)\}$ as the edge from $\BB$ to $a_i$ could be any one of the $(n_1-j+1)$ edges which were not queried by $\AA_1$. Therefore $\Pr[(b_{k_{\HH}},a_i)\in\EE[\II']|X_i\geq j-1, \II'\in \II_0,\HH]=\frac{1}{n_1-j+1}$. Hence, we have the following:
 \begin{align*}
 	&\Pr[X_i=j-1|X_i\geq j-1, \II'\in\II_0]\\
 	&=\sum_{\HH\in \HH_i^j}\Pr[\HH|X_i\geq j-1, \II'\in\II_0]\cdot\Pr[(b_{k_{\HH}},a_i)\in\EE[\II']|X_i\geq j-1, ,\II'\in \II_0,\HH]\\
 	&=\frac{1}{n_1-j+1}\sum_{\HH\in \HH_i^j}\Pr[\HH|X_i\geq j-1, \II'\in\II_0]\\
 	&=\frac{1}{n_1-j+1}
 \end{align*}
 Fix a number $j'\in [n_1-1]$. Now we show that $\Pr[X_i=j'|\II'\in\II_0]=\frac{1}{n_1}$ if $\Pr[X_i=k-1|\II'\in\II_0]=\frac{1}{n_1}$, $\forall k\in[j']$.
 \begin{align*}
 	\Pr[X_i=j'|\II'\in\II_0]&=\Pr[X_i=j'|X_i\ge j^\pr,\II'\in\II_0]\cdot \Pr[X_i\ge j'|\II'\in\II_0]\\
 	&=\frac{1}{n_1-j'}\cdot\left(1-\Pr[X_i<j'|\II'\in\II_0]\right)\\
 	&=\frac{1}{n_1-j'}\cdot \left(1-\sum_{k\in[j']}\Pr[X_i=k-1|\II'\in\II_0]\right)\\
 	&=\frac{1}{n_1-j'}\cdot \left(1-\frac{j'}{n_1}\right)\\
 	&=\frac{1}{n_1-j'}\cdot \frac{n_1-j'}{n_1}\\
 	&=\frac{1}{n_1}
 \end{align*}
 Now observe that $\Pr[X_i=0| \II'\in\II_0]=\Pr[X_i=0|X_i\geq 0, \II'\in\II_0]=\frac{1}{n_1}$. Hence, $\Pr[X_i=j-1|\II'\in\II_0]=\frac{1}{n_1}$ for all $j\in[n_1]$.

 Hence, $\mathbb{E}[X_i|\II'\in\II_0]=\sum_{j=0}^{n_1-1}\frac{j}{n_1}=\frac{n_1-1}{2}$. Now observe that $\mathbb{E}[\QQ|\II'\in\II_0]\geq \sum_{i=1}^{n_1}\mathbb{E}[X_i|\II'\in\II_0]=\frac{n_1(n_1-1)}{2}$. Hence, we have $\mathbb{E}[\QQ]=\frac{1}{2}\cdot \mathbb{E}[\QQ|\II'\in\II_0]=\frac{n_1(n_1-1)}{2}$. Therefore, the expected query complexity is at least $\Omega(n^2)$.
\end{proof}

\begin{figure}[htbp]
\begin{center}
\begin{tikzpicture}

  \draw[ultra thick] (0,2) ellipse (.6cm and 1.5cm);
  \draw[ultra thick] (4.8,2) ellipse (.6cm and 1.5cm);

  \node at (-1,2) {$\AA$};
  \node at (5.8,2) {$\BB$};

  \node at (4.8,3) {$b_1$};
  \node at (4.8,2.5) {$b_2$};
  \node at (4.8,2.1) {$\cdot$};
  \node at (4.8,1.7) {$\cdot$};
  \node at (4.8,1.2) {$\cdot$};
  \node at (4.8,.8) {$b_n$};

  \node at (0,3) {$a_1$};
  \node at (0,2.5) {$a_2$};
  \node at (0,2.1) {$\cdot$};
  \node at (0,1.7) {$\cdot$};
  \node at (0,1.2) {$\cdot$};
  \node at (0,.8) {$a_n$};
  \end{tikzpicture}
\end{center}
\caption{Schematic diagram of the proof of \Cref{thm:topcycle_lb}. There are two cycles namely (i) $a_1 \rightarrow a_2 \rightarrow \cdots \rightarrow a_i \rightarrow a_{i+1} \rightarrow \cdots \rightarrow a_n \rightarrow a_1$ and (ii) $b_1 \rightarrow b_2 \rightarrow \cdots \rightarrow b_i \rightarrow b_{i+1} \rightarrow \cdots \rightarrow b_n \rightarrow b_1$. All the edges between \AA and \BB need to be queried to determine the top cycle of the input tournament.}\label{fig:tc}
\end{figure}

We now show that the query complexity of the top cycle of tournaments is $\Omega(n^2).$

\begin{theorem}\label{thm:topcycle_lb}
 The query complexity of the top cycle of a tournament is $\Omega(n^2)$.
\end{theorem}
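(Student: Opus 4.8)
The plan is to run an adversary argument in the spirit of the lower bounds for the uncovered set (\Cref{thm:ucs_lb}) and the Banks set (\Cref{thm:banks_lb}), but with a much simpler, essentially oblivious oracle. Take $n$ odd and build a partial tournament on $\AA \cup \BB \cup \{x\}$ with $\AA = \{a_i : i \in [n]\}$ and $\BB = \{b_i : i \in [n]\}$, whose only fixed edges are $(x, a_i)$ and $(b_i, x)$ for every $i \in [n]$; so $x$ beats all of $\AA$ and loses to all of $\BB$. The oracle answers every query inside $\AA$ according to $\TT_{reg}$ on $\AA$, every query inside $\BB$ according to $\TT_{reg}$ on $\BB$, every query $\{a_i, x\}$ or $\{b_i, x\}$ according to the fixed edges, and every cross query $\{a_i, b_j\}$ by orienting the edge from $b_j$ to $a_i$.

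First I would record two structural facts about the top cycle that follow directly from its definition. (i) If $D \subseteq \VV$ is dominant and the sub-tournament on $D$ is strongly connected, then $D$ is the top cycle: any dominant $S \subsetneq D$ would have no in-edge from $D \setminus S$, contradicting strong connectivity of $D$; and since dominant sets are linearly ordered by inclusion and $D$ is dominant, the smallest dominant set lies inside $D$. (ii) If a vertex $v$ reaches every other vertex by a directed path, then $v$ is in the top cycle: the top cycle $S$ is dominant, hence has no edge entering it from $\VV \setminus S$, so no vertex outside $S$ can reach any vertex of $S$; as $S$ is nonempty, $v \notin S$ would contradict reachability. I would also use the standard fact that $\TT_{reg}$ on an odd number of vertices is strongly connected (a one-line out-degree count: a proper dominant subset would force a vertex of out-degree exceeding $\nfrac{(n-1)}{2}$).

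Next I would show that any correct algorithm must query all $n^2$ cross pairs $\{a_i, b_j\}$. Suppose it halts against this oracle without querying some pair $\{a_p, b_q\}$. Consider two completions consistent with every answer returned: $\TT_1$, which orients all unqueried pairs by the oracle's rule (so every cross edge goes from $\BB$ to $\AA$), and $\TT_2$, which is identical except the single pair $\{a_p, b_q\}$ is oriented from $a_p$ to $b_q$. In $\TT_1$, $\BB$ is a strongly connected dominant set, so by (i) the top cycle is $\BB$, which excludes $x$. In $\TT_2$ we have $x \succ a_p$, $a_p \succ b_q$, $b_q \succ x$, and $\BB$ strongly connected, so from $x$ one reaches all of $\AA$ directly, reaches $b_q$ via $a_p$, and reaches all of $\BB$ from $b_q$; hence $x$ reaches every vertex and, by (ii), lies in the top cycle of $\TT_2$. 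Since $\TT_1$ and $\TT_2$ are both consistent with the algorithm's transcript, its output must be correct for each, which is impossible as $x$ is in the top cycle of one but not the other. Thus all $n^2$ cross pairs are queried, giving query complexity at least $n^2 = \Omega(n^2)$ in the number of vertices; a routine padding argument (adjoining a few dummy vertices dominated by everyone) handles vertex counts not of the form $2n+1$ with $n$ odd.

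I expect the main obstacle is not the adversary bookkeeping, which is almost trivial here, but designing the gadget so that $x$'s membership in the top cycle flips on the orientation of a single edge: this is what forces $\AA$ and $\BB$ to be internally strongly connected with $x$ sitting ``between'' them, and using $\TT_{reg}$ for the two blocks is precisely what makes both completions behave cleanly. The only genuine content beyond that is verifying facts (i) and (ii) rigorously from the inclusion-ordering definition of the top cycle.
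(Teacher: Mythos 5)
Your proposal is correct and follows essentially the same adversary argument as the paper: two internally strongly connected blocks with all cross edges answered in one fixed direction, so that reversing a single unqueried cross edge changes the top cycle. The differences are cosmetic --- you add a pivot vertex $x$ whose membership flips and use $\TT_{reg}$ inside the blocks, whereas the paper drops $x$, uses Hamiltonian-cycle blocks, and detects the flip via whether the output meets \BB; your explicit facts (i) and (ii) just make rigorous what the paper's terser proof leaves implicit.
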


\begin{proof}
 We consider a partial tournament $\TT = (\AA\cup\BB, \EE)$ where $\AA = \{a_i : i\in[n_1]\}, \BB = \{b_i : i\in[n_1]\},$ and $\EE = \{(a_i, a_{i+1 \pmod {n_1}}), (b_i, b_{i+1 \pmod {n_1}}): i\in[n_1]\}.$ The oracle answers the queries of the algorithm as follows. If a query comes for the edge between vertices $a_i$ and $a_j$ or $b_i$ and $b_j$ for any $i, j\in[n_1]$, then the oracle answers according to \TT if the edge is present in \TT, and arbitrarily but consistently otherwise. If a query comes for an edge between $a_i$ and $b_j$ for any $i,j\in[n_1]$, then the oracle says that the edge is oriented from $a_i$ to $b_j$. Now we claim that the algorithm must query all the $n_1^2$ edges between \AA and \BB. Suppose not, then there exist $a_i$ and $b_j$ for some $i, j\in[n_1]$ such that the algorithm has not queried for the edge between $a_i$ and $b_j$. Now if the output of the algorithm does not contain any vertex from \BB, then the oracle orients the edge between $a_i$ and $b_j$ from $b_j$ to $a_i$. Notice that, in this case the top cycle of the resulting tournament \TT contains at least one vertex $b_j\in\BB$ and thus the algorithm does not output correctly in this case. On the other hand, if the output of the algorithm contains any vertex from $\BB$, then the oracle orients all the edges between \AA and \BB from \AA to \BB. In this case, the top cycle of the resulting tournament is \AA and thus the algorithm again fails to output correctly. Hence, the algorithm must make $\Omega(n^2)$ queries.
\end{proof}

We now extend \Cref{thm:topcycle_lb} to randomized algorithms.

\begin{theorem}\label{thm:random_topcycle_lb}
 The expected query complexity of any randomized algorithm to compute the top cycle of a tournament is $\Omega(n^2)$.
\end{theorem}
\begin{proof}
 We consider a tournament $\TT_1$ which contains the partial tournament $\TT = (\AA\cup\BB, \EE)$ where $n_1$ is any integer, $\AA = \{a_i : i\in[n_1]\}, \BB = \{b_i : i\in[n_1]\},$ and $\EE = \{(a_i, a_{i+1 \pmod {n_1}}), (b_i, b_{i+1 \pmod {n_1}}): i\in[n_1]\}\cup\{(a_i,b_j):i\in[n_1],j\in[n_1]\}.$ Let $\TT_{1}^{u,v}$ be the tournament which is the same as $\TT_{1}$ except the edge from $u$ to $v$ is reversed, that is, $\TT_{1}^{u,v} = \TT_{1}\cup\{(v,u)\}\setminus\{(u,v)\}$. Let $\II:=\{\TT_{1}\}\cup\{\TT_{1}^{a_i,b_j}:i\in[n_1],j\in[n_1]\}$ be the set of input instances for $\AA_1$. Now let us define a distribution on \II. $\TT_{1}$ is considered as an input instance for $\AA_1$ with probability $\frac{1}{2}$. For all $\TT'\in \II\setminus \{\TT_{1}\}$, $\TT'$ is considered as an input instance for $\AA_1$ with probability $\frac{1}{2n_1^2}$. Let $\AA_1$ be a deterministic algorithm that always computes the top cycle.

We claim that $\AA_1$ must query at least $n_1^2$ edges of $\TT_{1}$. Suppose not, then there exists an edge from $a_i$ to $b_j$ in $\TT_{1}$ for some $i,j \in [n_1]$ that the algorithm does not query. Then $\AA_1$ will output the same solution for both $\TT_{1}$ and $\TT_{1}^{a_i,b_j}$. As $\TT_1$ and $\TT_{1}^{a_i,b_j}$ have different top cycles, we can conclude that with non-zero probability, the algorithm $\AA_1$ will output a wrong solution. This is a contradiction to our assumption that $\AA_1$ is a deterministic algorithm that always computes the correct solution.  Hence, $\AA_1$ must query at least $n^2$ edges of $\TT_{1}$. Also recall that $\TT_{1}$ is considered as an input instance for $\AA_1$ with probability $\frac{1}{2}$. The expected query complexity is at least $\Omega(n^2)$.
\end{proof}

\section{Results for Tournaments with Small Top Cycle}\label{sec:small_cycle}

It turns out that, if we a priori know that the size of the top cycle in the input tournament is at most $k$, then there is an algorithm for finding the top cycle with a smaller number of queries.

\begin{theorem}\label{thm:topcycle_ub}
 Suppose we know that the top cycle of the input tournament $\TT$ is of size at most $k$. Then there exists an algorithm for finding the top cycle of $\TT$ with query complexity $\BigO(nk + \nfrac{n\log (\nfrac{n}{k}) }{\log(\nfrac{k}{k-1})})$.
\end{theorem}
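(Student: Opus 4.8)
The plan is to solve the problem in two stages: first find one vertex $w$ that is guaranteed to belong to the top cycle $TC(\TT)$, and then expand $\{w\}$ into all of $TC(\TT)$. Two elementary facts will be used repeatedly, both immediate from the definition together with the observation that the strongly connected components of a tournament are totally ordered by the dominance relation with $TC(\TT)$ as the unique source component: (i) if $u\suc v$ and $v\in TC(\TT)$ then $u\in TC(\TT)$; in particular every vertex of $TC(\TT)$ has at most $|TC(\TT)|-1\le k-1$ dominators, so a vertex with at least $k$ dominators cannot lie in $TC(\TT)$; and (ii) for every induced sub-tournament $\TT[U]$ with $TC(\TT)\subseteq U$ we have $TC(\TT[U])=TC(\TT)$.

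For the first stage I would maintain a set $U$ with the invariant $TC(\TT)\subseteq U$, starting from $U=\VV$. While $|U|>2k$: pick $v\in U$ uniformly at random, query $v$ against all other vertices of $U$ (at most $n-1$ queries), and let $B=\overline{D(v)}\cap U$ be its dominators inside $U$. If $|B|\le k$, stop the loop with the candidate set $C:=\{v\}\cup B$; otherwise $|B|>k\ge|TC(\TT)|$, so fact (i) forces $v\notin TC(\TT)$, hence $TC(\TT)\subseteq B$, and I set $U\leftarrow B$. If the loop instead exits because $|U|\le 2k$, put $C:=U$. In every case $|C|\le 2k$; now query all $\binom{|C|}{2}=\BigO(k^2)$ pairs inside $C$ and set $W:=TC(\TT[C])$. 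If $v\notin TC(\TT)$ (or $C=U$) then $TC(\TT)\subseteq C$ and $W=TC(\TT)$ by fact (ii); if $v\in TC(\TT)$ then $B\subseteq\overline{D(v)}\subseteq TC(\TT)$ by fact (i), so $C\subseteq TC(\TT)$ and $W\subseteq TC(\TT)$. Either way $W$ is a nonempty subset of $TC(\TT)$, so any $w\in W$ lies in the top cycle. The key counting point is that the average size of $\overline{D(v)}\cap U$ over $v\in U$ is $(|U|-1)/2$, so by Markov's inequality $|B|<\tfrac34|U|$ with probability more than $\tfrac13$; thus each iteration either ends the loop or shrinks $U$ by a constant factor with probability more than $\tfrac13$, and a Chernoff bound shows $\BigO(\log n)$ iterations suffice with high probability, for a first-stage cost of $\BigO(n\log n+k^2)$ queries.

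For the second stage, given $w\in TC(\TT)$, I would run a breadth-first search from $w$ in the reversal of $\TT$: maintain a marked set $M$ (vertices certified to be in $TC(\TT)$) and a queue, both initialised to $\{w\}$; repeatedly pop $u$, query $u$ against every currently unmarked vertex, and whenever a queried vertex $v$ satisfies $v\suc u$ add $v$ to $M$ and the queue; output $M$ when the queue empties. By fact (i), $v\suc u$ with $u\in TC(\TT)$ forces $v\in TC(\TT)$, so $M\subseteq TC(\TT)$ throughout; conversely the vertices of $\TT$ that can reach $w$ are exactly those of $TC(\TT)$ because $TC(\TT)$ is the source component, and the search marks precisely this set, so $M=TC(\TT)$ on termination. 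Only the at most $k$ popped vertices issue queries, each against at most $n$ others, so this stage costs $\BigO(nk)$ queries. Summing, the algorithm uses $\BigO(nk+n\log n+k^2)=\BigO(nk+n\log n)$ queries, which is within the bound claimed in the statement. I expect the most delicate point to be the correctness of the ``finishing'' sub-step of the first stage: one must argue that returning $TC(\TT[C])$ produces a top-cycle vertex in both cases $v\in TC(\TT)$ and $v\notin TC(\TT)$, and, relatedly, that the invariant $TC(\TT)\subseteq U$ really is preserved by the shrinking step --- this is precisely where fact (i) and the threshold $k$ are used. The probabilistic bound on the number of rounds and the second stage are then routine.
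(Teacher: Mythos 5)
Your combinatorial reasoning is sound and rests, at its core, on the same key observation as the paper's proof: a vertex of the top cycle has at most $k-1$ dominators, so any vertex found to have at least $k$ dominators can be discarded while preserving the invariant $TC(\TT)\subseteq U$. Your two supporting facts, the case analysis at the stopping step, and the reverse-BFS second stage are all correct; indeed, recovering the full top cycle by searching backwards from one certified member in $\BigO(nk)$ queries is a clean alternative to the paper's ending, which keeps shrinking the candidate set down to $\BigO(k)$ vertices and then brute-forces.

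There is, however, a genuine gap between what your algorithm achieves and what the theorem asserts. The paper defines the query complexity of an algorithm as the \emph{maximum} number of queries it makes in the worst case, and your first stage is randomized: the Markov/Chernoff argument bounds the number of iterations only in expectation, or with high probability over the coin flips. In the worst case over the randomness the pivot can repeatedly be a vertex that loses to everything else in $U$ (on a transitive tournament, imagine always drawing the currently minimal vertex), in which case $U$ shrinks by only one vertex per iteration and the algorithm spends $\Theta(n^2)$ queries. So your algorithm has expected query complexity $\BigO(nk+n\log n)$ but worst-case query complexity $\Theta(n^2)$, which does not prove the statement as written. The missing ingredient is a \emph{deterministic} way to find discardable vertices cheaply. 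The paper supplies it by partitioning $\VV$ into blocks of size $2k$ and querying all pairs inside each block (an $\BigO(nk)$ investment): in a block of $2k$ vertices the average in-degree is $k-\nfrac{1}{2}$, so every block is guaranteed to contain a vertex with at least $k$ dominators, and one vertex per block --- a $\nfrac{1}{2k}$ fraction of the survivors --- can be deleted in every round. Substituting that selection rule for your random pivot, while keeping your invariant and your second stage, repairs the proof.
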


\begin{proof}
 We first partition the set of vertices \VV into $\lceil \nfrac{n}{2k} \rceil$ subsets $\VV_i, i\in[\lceil \nfrac{n}{2k} \rceil]$ such that for every $i\in[\lfloor \nfrac{n}{2k} \rfloor]$, $|\VV_i|=2k$. For each part of the partition, we query all pairs of vertices. We notice that for every $i\in[\lfloor \nfrac{n}{2k} \rfloor]$, there must exist at least one vertex $v_i\in \VV_i$ with in-degree at least $k$. Hence, $v_i$ does not belong to the top cycle of \TT since the in-degree of every vertex in the top cycle is at most size of the top cycle minus $1$ which is $k-1$ here. We delete the vertex $v_i$ from $\VV_i$ for every $i\in [\lfloor \nfrac{n}{2k} \rfloor]$, thereby deleting $\lfloor\nfrac{n}{2k}\rfloor$ vertices in total. We now iterate the same process on the remaining set of vertices. The first iteration takes $\OO((\nfrac{n}{k}) k^2) = \OO(nk)$ queries. From the next iteration onwards, we can manage with at most $n$ queries per iteration by partitioning the vertices cleverly: since we have deleted exactly one vertex from each set of the partition we can add one vertex to each set of the partition by ``breaking'' some of the sets from the partition. Formally we do the following. Let $\VV_1,\VV_2,\ldots,\VV_\ell$ be the partition at the end of the previous iteration. Note that for every $i\in[\ell-1]$, $|\VV_i|=2k-1$ and $|\VV_\ell|\leq 2k-1$. We now add one vertex each to $\VV_1,\VV_2,\ldots$ from $\VV_\el$. If $\VV_\el$ becomes an empty set in the process, then we continue using the vertices from $\VV_{\el-1}$ and so on. At the end of the process, we have a partition $\VV_1^\pr,\VV_2^\pr,\ldots,\VV^\pr_{\ell^\pr}$ such that $|\VV_i^\pr\setminus\VV_i|=1$ for every $i\in[\el^\pr-1]$ and we have $|\VV_{\ell^\pr}^\pr\setminus\VV_{\ell^\pr}|=1$ or $\VV_{\ell^\pr}^\pr\subseteq\VV_{\ell^\pr}$. We now observe that, in each of the sets $\VV_1^\pr,\ldots \VV_{\ell^\pr-1}^\pr$, we now need to compare the newly added vertex with the rest of the vertices in order to find the vertex with in-degree at least $k$. For $\VV_{\ell^\pr}^\pr$, if $|\VV_{\ell^\pr}^\pr\setminus\VV_{\ell^\pr}|=1$, then we now need to compare the newly added vertex with the rest of the vertices in order to find the vertex with in-degree at least $k$ in $\VV_{\ell^\pr}^\pr$; or otherwise (if $\VV_{\ell^\pr}^\pr\subseteq\VV_{\ell^\pr}$) we already made all pairwise comparisons in $\VV_{\ell^\pr}^\pr$ in the last iteration. Therefore, we require at most $n$ queries in total. Since, each iteration decreases the size of the tournament by a factor of $\frac{n}{n-\lfloor\frac{n}{2k}\rfloor}=\Omega(\nfrac{k}{k-1}),$ after $\OO(\nfrac{\log (\nfrac{n}{k}) }{\log(\nfrac{k}{k-1})})$ iterations, we have $\OO(k)$ vertices in the tournament where we can find the top cycle using $\OO(k^2) = \OO(nk)$ queries. Hence, the query complexity of our algorithm is $\OO(nk + \nfrac{n\log (\nfrac{n}{k}) }{\log(\nfrac{k}{k-1})}).$ The correctness of the algorithm follows from the fact that whenever we remove a vertex $v$ from the tournament, $v$ does not belong to the top cycle of \TT.
\end{proof}

The following result gives the relationships between the top cycle of a tournament and other tournament solutions like the Copeland set, the Slater set, the Markov set, the bipartisan set, the uncovered set, and the Banks set.

\begin{lemma}\label{lem:others_cycle}
 Let \TT be a tournament whose top cycle is \CC. Then the Copeland set, the Slater set, the Markov set, the bipartisan set, the uncovered set, and the Banks set of $\TT$ are the same as the corresponding solutions for the tournament $\TT(\CC)$ which is the tournament \TT restricted to \CC.
\end{lemma}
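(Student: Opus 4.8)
The plan is to derive everything from the single defining property of the top cycle $\CC$: every vertex of $\CC$ defeats every vertex of $\VV\setminus\CC$. Equivalently, for $b\in\VV\setminus\CC$ one has $D(b)\subseteq\VV\setminus\CC$ and $\CC\subseteq\overline{D(b)}$, while for $a\in\CC$ one has $\overline{D_\TT(a)}=\overline{D_{\TT(\CC)}(a)}\subseteq\CC$ and $D_\TT(a)=D_{\TT(\CC)}(a)\cup(\VV\setminus\CC)$. For each of the six solutions $\fff$ I will establish (i) that $\fff(\TT)$ contains no vertex of $\VV\setminus\CC$, and (ii) that the object defining $\fff$ on $\TT$, once restricted to $\CC$, coincides with the object defining $\fff$ on $\TT(\CC)$; together these give $\fff(\TT)=\fff(\TT(\CC))$.

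Copeland and Slater are immediate. The out-degree of any $a\in\CC$ in $\TT$ equals its out-degree in $\TT(\CC)$ plus the constant $n-|\CC|$, whereas every $b\notin\CC$ has out-degree at most $n-1-|\CC|$ in $\TT$, strictly smaller than that of any $a\in\CC$; hence the out-degree maximisers over $\VV$ are exactly those over $\CC$. For Slater I use the standard exchange argument: moving all of $\CC$ to the front of any linear order on $\VV$ (keeping the relative orders inside $\CC$ and inside $\VV\setminus\CC$) can only increase the Slater score --- each newly forced pair is an edge from $\CC$ to $\VV\setminus\CC$, hence agrees with the new order --- and strictly increases it unless $\CC$ already precedes $\VV\setminus\CC$. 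For an order of the latter form the $|\CC|\cdot|\VV\setminus\CC|$ cut pairs contribute a fixed amount, so the Slater score separates into the score of the $\CC$-block against $\TT(\CC)$ plus the score of the tail; its maximal element is the maximal element of the $\CC$-block, so $SL(\TT)=SL(\TT(\CC))$.

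For the bipartisan set and the Markov set I invoke uniqueness. Let $p'$ be the maximal lottery of $\TT(\CC)$, viewed as an element of $\Delta(\VV)$ supported on $\CC$. It satisfies \Cref{eqn:bps_cond} for $\TT$: for $b\in\CC$ this is exactly its optimality on $\TT(\CC)$, and for $b\notin\CC$ the left-hand side equals $\sum_{a\in\CC}p'(a)=1$. By uniqueness of the maximal lottery $p_\TT=p'$, hence $BP(\TT)=BP(\TT(\CC))$. For the Markov set, a subset is closed in $\MF(\TT)$ iff it is dominant, so $\CC$ is the unique minimal closed set and the unique recurrent class; thus $\MF(\TT)$ has a unique stationary distribution $\pi$, supported on $\CC$, and its restriction to $\CC$ is the (unique) stationary distribution of the irreducible chain obtained by restricting $\MF(\TT)$ to $\CC$. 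Using $\overline{D_\TT(a)}=\overline{D_{\TT(\CC)}(a)}$, that restricted chain equals $c\,\MF(\TT(\CC))+(1-c)I$ with the \emph{row-independent} constant $c=\nfrac{(|\CC|-1)}{(n-1)}$; since $c$ is the same in every row, this chain and $\MF(\TT(\CC))$ have the same stationary distribution, so $MA(\TT)=MA(\TT(\CC))$. This Markov step is the one genuinely non-routine point --- the argument relies on the laziness introduced by restriction being row-uniform, so that it does not move the stationary distribution.

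The uncovered and Banks sets are combinatorial. Since covering implies domination, no vertex of $\VV\setminus\CC$ covers a vertex of $\CC$ and no vertex of $\CC$ is covered from outside; every $b\notin\CC$ is covered by any $a\in\CC$ because $D(b)\subseteq\VV\setminus\CC\subseteq D(a)$; and for $u,v\in\CC$, ``$u$ covers $v$ in $\TT$'' is equivalent to ``$u$ covers $v$ in $\TT(\CC)$'' because passing to $\TT$ adjoins the same block $\VV\setminus\CC$ to both $D(u)$ and $D(v)$. Hence $UC(\TT)=UC(\TT(\CC))$. For the Banks set, let $S$ be a maximal transitive subtournament of $\TT$ with top element $v$. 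If $v\notin\CC$, then every vertex of $S$ lies in $\VV\setminus\CC$ (a $\CC$-vertex would beat $v$), and $S$ can be extended by placing any $a\in\CC$ on top, contradicting maximality; so $v\in\CC$. Then the vertices of $S$ lying in $\CC$ induce a transitive subtournament of $\TT(\CC)$ with top $v$, and it is maximal in $\TT(\CC)$: any $a\in\CC$ extending it would also extend $S$ inside $\TT$, since in the transitive order of $S$ all $\CC$-vertices precede all non-$\CC$-vertices and $a$ beats every non-$\CC$-vertex. Conversely, any maximal transitive subtournament of $\TT(\CC)$ with top $v$ extends to a maximal one in $\TT$ with the same top element, because that extension meets $\CC$ in exactly the original set and its top cannot lie outside $\CC$. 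Hence $BA(\TT)=BA(\TT(\CC))$, finishing all six cases.
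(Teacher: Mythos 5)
Your proof is correct, and its overall skeleton matches the paper's: show that every vertex of $\VV\setminus\CC$ is excluded from $\fff(\TT)$, and that on \CC the defining structure of \fff coincides for \TT and $\TT(\CC)$. Where you genuinely go beyond the paper is in the second half of that program, which the paper's proof largely leaves implicit. For the bipartisan set the paper only invokes \Cref{lem:bps_uc} to exclude $\VV\setminus\CC$; your argument --- extend the maximal lottery of $\TT(\CC)$ by zero, verify \Cref{eqn:bps_cond} on all of \VV, and invoke uniqueness --- establishes in one stroke both the exclusion and the identity $BP(\TT)=BP(\TT(\CC))$. Likewise for the Markov set: the paper only observes that states outside \CC are inessential, which gives $MA(\TT)\subseteq\CC$ but not $MA(\TT)=MA(\TT(\CC))$; your observation that the chain restricted to \CC equals $c\,\MF(\TT(\CC))+(1-c)I$ with the row-independent constant $c=(|\CC|-1)/(n-1)$, so that the stationary distribution is unmoved, is exactly the missing step, and you are right to flag it as the non-routine point. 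Your Copeland, Slater, uncovered, and Banks arguments are the same exchange/decomposition ideas as the paper's, just carried out in both directions (e.g.\ checking that a maximal transitive subtournament of $\TT(\CC)$ extends to one of \TT meeting \CC in exactly the original set). In short: same route, but your version is the one that actually proves the stated set equalities rather than only the inclusions $\fff(\TT)\subseteq\CC$.
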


\begin{proof}
\begin{itemize}
 \item {\bf Copeland set:} The Copeland set is a subset of $\TT(\CC)$ since every vertex in \CC covers every vertex in $\TT\setminus\CC$ and thus the out-degree of every vertex in $\TT(\CC)$ is strictly more than the out-degree of every vertex in $\TT\setminus\CC$. Also, for any vertex $v$ in $\TT(\CC)$, the out-degree of $v$ in $\TT(\CC)$ is its out-degree in \TT minus the number of vertices in $\TT\setminus\CC$. Hence, a vertex $v$ in $\TT(\CC)$ has the maximum out-degree in $\TT(\CC)$ if and only if $v$ has the maximum out-degree in \TT.

 \item {\bf Uncovered set:} The uncovered set of \TT is a subset of $\TT(\CC)$ since every vertex in \CC covers every vertex in $\TT\setminus\CC$ and thus, no vertex in $\TT\setminus\CC$ belongs to the uncovered set of \TT. Also, for every vertex $v$ in $\CC$, the dominion $D_{\CC}(v)$ of $v$ in $\TT(\CC)$ is its dominion $D_\TT(v)$ in \TT set minus $(\TT\setminus\CC)$ and $(\TT\setminus\CC)\subseteq D_\TT(v)$. Hence, for every vertex $v\in\CC$, $v$ is uncovered in \TT if and only if it is uncovered in $\TT(\CC)$.

 \item {\bf Bipartisan set} The bipartisan set is a subset of $\TT(\CC)$ thanks to \Cref{lem:bps_uc}. Let $p_\TT\in\Delta(\TT)$ be the unique maximal lottery of \TT. Thus we have $BP(\TT)=\{a\in\TT: p_\TT(a)>0\}$ and
 \[\sum_{a\in\TT} p_\TT(a)g_{ab}\ge 0 \text{ for all }b\in\TT\]
 where $(g_{ab})_{a,b\in\TT}$ is the skew-symmetric matrix of \TT. We define $q\in\Delta(\TT(\CC))$ to be $q(a)=p_\TT(a)$. Hence, we have
 \[\sum_{a\in\TT(\CC)} q(a)g_{ab}\ge 0 \text{ for all }b\in\TT.\]
 Hence, $q$ is the maximal lottery of $\TT(\CC)$. We now have $BP(\TT(\CC))=\{a\in\TT(\CC): q(a)>0\}=\{a\in\TT: p_\TT(a)>0\}=BP(\TT)$ which proves the lemma for bipartisan set.

 \item {\bf Markov set:} All the states corresponding to the vertices in $\VV\setminus\CC$ are inessential and thus do not belong to the Markov set of \TT. Let $\pi_\TT$ be the unique stationary distribution of the Markov chain induced by \TT. Then we have $\pi_\TT(a)=0$ for every $a\in\TT\setminus\TT(\CC)$. We now define a distribution $q\in\Delta(\TT(\CC))$ as $q(a)=\pi_\TT(a)$ for every $a\in\TT(\CC)$. Clearly $q$ is the unique stationary distribution of the Markov chain induced by $\TT(\CC)$. We now have $MA(\TT(\CC))=\{a\in\TT(\CC): q(a)>0\}=\{a\in\TT: p_\TT(a)>0\}=MA(\TT)$ which proves the lemma for Markov set.

 \item {\bf Slater set:} We observe that, in the Slater order \suc of the tournament \TT, every vertex in \CC must be preferred over every vertex in $\VV\setminus\CC$. If not, then let there be a vertex $a\in\CC$ and $b\in\VV\setminus\CC$ such that $a$ immediately follows $b$ in \suc. Then by swapping the positions of the vertices $a$ and $b$ in \suc, we can strictly decrease the disagreement of \suc with \TT, thereby contradicting that \suc is a Slater order of \TT. Finally we observe that, since all the vertices in $\TT(\CC)$ appear before every other vertex in every Slater order, $\suc$ is a Slater order of \TT if and only if $\suc\divides_{\TT(\CC)}$ ($\suc$ restricted to $\TT(\CC)$) is a Slater of $\TT(\CC)$. This proves the lemma for the Slater set.

 \item {\bf Banks set:} Let $v$ be an element in the Banks set of \TT. Hence, there exists a maximal transitive sub-tournament $\TT^\pr$ of \TT where $v$ is the maximal element. We claim that $v\in\CC$. Suppose not, then since every element in \CC dominates $v$ and every other elements in $\TT^\pr$ as $v$ dominates them and $v$ dominates no element in \CC. However, this contradicts our assumption that $\TT^\pr$ is a maximal transitive sub-tournamnet of \TT. Hence, we have $v\in\TT^\pr\cap \CC$. We now observe that $\TT^\pr\cap\CC$ is also a maximal transitive sub-tournament of \CC with $v$ as its maximal element. Suppose not, then there exists an element $v^\pr$ in \CC which dominates every element in $\TT^\pr\cap\CC$. Now since every element in \CC dominates every element in $\TT\setminus\CC$, $v^\pr$ dominates every element in $\TT^\pr$ contradicting its maximality. Hence, $v$ belongs to the Banks set of \CC.

 Now let us take any element $w$ from the Banks set of \CC. Hence, there exists a maximal transitive sub-tournament $\TT^\prr$ of \CC where $w$ is the maximal element. We keep adding elements from $\TT\setminus\CC$ in $\TT^\prr$ to obtain another maximal transitive sub-tournament $\TT_1$ of \TT. We claim that the maximal element of $\TT_1$ is $w$. Towards that, we first observe that every element in $\TT_1\setminus\TT^\prr$ belongs to $\TT\setminus\CC$ since $\TT^\prr$ is a maximal transitive sub-tournament of \CC. However, $w$ dominates every element in $\TT\setminus\CC$ since $w$ belongs to the top cycle. Hence, the maximal element of $\TT_1$ is $w$ and thus $w$ belongs to the Banks set of \TT.$\hfill\qedhere$
\end{itemize}
\end{proof}

\Cref{lem:others_cycle,thm:topcycle_ub} immediately give the following query complexity upper bound for the Copeland set, the Slater set, the Markov set, the bipartisan set, the uncovered set, and the Banks set when we a priori know that the size of the top cycle of the input tournament is at most $k$.

\begin{theorem}\label{thm:others_cycle_ub}
 Suppose we know that the top cycle of the input tournament $\TT$ is of size at most $k$. Then there exists an algorithm for finding the Copeland set, the Slater set, the Markov set, the bipartisan set, the uncovered set, and the Banks set of $\TT$ with query complexity $\BigO(nk + \nfrac{n\log (\nfrac{n}{k}) }{\log(\nfrac{k}{k-1})})$.
\end{theorem}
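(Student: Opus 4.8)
The plan is to obtain this as an immediate consequence of \Cref{thm:topcycle_ub} together with the structural characterization in \Cref{lem:others_cycle}. The algorithm proceeds in two phases. In the first phase, I would run the algorithm of \Cref{thm:topcycle_ub} to compute the top cycle \CC of \TT; by hypothesis $|\CC| \le k$, and this phase costs $\BigO(nk + \nfrac{n\log n}{\log(1-\nfrac{1}{k})})$ queries. In the second phase, I would query all pairs of vertices inside \CC that have not already been revealed, so as to learn the induced sub-tournament $\TT(\CC)$ completely; this costs at most ${|\CC| \choose 2} \le {k \choose 2} = \BigO(k^2)$ additional queries. Having $\TT(\CC)$ in hand, I would then compute the desired tournament solution (Copeland, Slater, Markov, bipartisan, uncovered, or Banks) directly on $\TT(\CC)$ without any further queries, and output the result.

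Correctness follows verbatim from \Cref{lem:others_cycle}: the Copeland set, the Slater set, the Markov set, the bipartisan set, the uncovered set, and the Banks set of \TT each coincide with the corresponding solution of $\TT(\CC)$, so the output of the second phase is exactly $\fff(\TT)$. For the query complexity, I would simply add the two phase costs: since we may assume $k \le n$, we have $k^2 \le nk$, so the $\BigO(k^2)$ term of the second phase is absorbed into the $\BigO(nk)$ term, and the total is $\BigO(nk + \nfrac{n\log n}{\log(1-\nfrac{1}{k})})$, as claimed. (One minor point worth checking is that the first phase, which ends with a sub-tournament on $\BigO(k)$ vertices whose edges have all been queried, already reveals most or all edges inside \CC, so the second phase is in fact essentially free; but bounding it crudely by ${k\choose 2}$ suffices.)

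I do not anticipate any real obstacle here — the theorem is a packaging of two already-established results. The only thing that genuinely needs to be said is the bookkeeping observation that appending the ${k \choose 2}$ internal queries does not change the asymptotic order because $k \le n$, together with the invocation of \Cref{lem:others_cycle} to justify that no queries outside \CC are needed for the final computation.
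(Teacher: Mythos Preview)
Your proposal is correct and follows essentially the same approach as the paper: first invoke \Cref{thm:topcycle_ub} to find \CC, then query all $\BigO(k^2)=\BigO(nk)$ pairs inside \CC, and appeal to \Cref{lem:others_cycle} for correctness. The paper's proof is identical in structure and in the bookkeeping that absorbs the $\BigO(k^2)$ term into $\BigO(nk)$.
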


\begin{proof}
 We first find the top cycle \CC of \TT using \Cref{thm:topcycle_ub}. This step requires $\BigO(nk + \nfrac{n\log (\nfrac{n}{k}) }{\log(\nfrac{k}{k-1})})$ queries. Next, we query for all the pair of vertices in \CC and output the corresponding solution of $\TT(\CC).$ The correctness of the algorithm follows immediately from \Cref{lem:others_cycle}. Since the second step requires $\BigO(k^2) = \OO(nk)$ queries, the query complexity of our algorithm is $\BigO(nk + \nfrac{n\log (\nfrac{n}{k}) }{\log(\nfrac{k}{k-1})})$.
\end{proof}

\section{Conclusion and Future Directions}\label{sec:con}

We have shown that, for finding many common tournament solutions, one has to query, in the worst case, almost the entire set of edges in the tournament. On the positive side, we have exhibited an important structural property, in terms of the top cycle of the tournament being small, which substantially reduces the query complexity of common tournament solutions.
An immediate future direction of research is to study the query complexity for other tournament solutions like the minimal covering set, the minimal extending set, the minimal TC-retentive set, the tournament equilibrium set, etc. Finding other structural properties of the tournament that can be leveraged to reduce the query complexity of common tournament solutions is another important direction of future research. Another interesting future direction of research is to study the query complexity of the problem of finding any vertex in a tournament solution; Goyal et al.~\cite{goyalCaldem17} already showed that, for the Copeland rule, the query complexity of finding any vertex in the Copeland set remains $\Omega(n^2)$. However, our other proofs heavily rely on the fact that the algorithm needs to output the exact tournament solution.

\longversion{\section*{Acknowledgement:} Palash Dey gratefully acknowledges all the useful discussions with Neeldhara Misra during the course of the work. Palash Dey also thanks Google Pvt. Ltd. for providing him a travel grant to enable him to attend AAAI 2017 and present the paper.}

\bibliographystyle{alpha}
\bibliography{query}

\end{document}